\DeclareMathOperator{\Tr}{Tr}
\DeclareMathOperator{\E}{\mathbb{E}}
\begin{document}
\def\isom{\simeq}
\def\norm#1{ {|\hspace{-.022in}|#1|\hspace{-.022in}|} }
\def\Norm#1{ {\big|\hspace{-.022in}\big| #1 \big|\hspace{-.022in}\big|} }
\def\NOrm#1{ {\Big|\hspace{-.022in}\Big| #1 \Big|\hspace{-.022in}\Big|} }
\def\NORM#1{ {\left|\hspace{-.022in}\left| #1 \right|\hspace{-.022in}\right|} }

\def\pmat#1{\begin{pmatrix} #1 \end{pmatrix}}
\def\qedsymbol{\rule{7pt}{7pt}}
\def\supp{ {\rm{supp \,}}}
\def\dist{ {\rm{dist }}}
\def\dim{ {\rm{dim \,}}}
\def\oti{{\otimes}}
\def\bra#1{{\langle #1 |  }}
\def\lb{ \left[ }
\def\rb{ \right]  }
\def\tilde{\widetilde}
\def\bar{\overline}
\def\*{\star}
\def\({\left(}		\def\BL{\Bigr(}
\def\){\right)}		\def\BR{\Bigr)}
	\def\BBL{\lb}
	\def\BBR{\rb}


\def\1{{\mathbf{1} }}

\def\bb{{\bar{b} }}
\def\ab{{\bar{a} }}
\def\zb{{\bar{z} }}
\def\zbar{{\bar{z} }}
\def\inv#1{{1 \over #1}}
\def\half{{1 \over 2}}
\def\d{\partial}
\def\der#1{{\partial \over \partial #1}}
\def\dd#1#2{{\partial #1 \over \partial #2}}
\def\vev#1{\langle #1 \rangle}
\def\ket#1{ | #1 \rangle}
\def\rvac{\hbox{$\vert 0\rangle$}}
\def\lvac{\hbox{$\langle 0 \vert $}}
\def\2pi{\hbox{$2\pi i$}}
\def\e#1{{\rm e}^{^{\textstyle #1}}}
\def\grad#1{\,\nabla\!_{{#1}}\,}
\def\dsl{\raise.15ex\hbox{/}\kern-.57em\partial}
\def\Dsl{\,\raise.15ex\hbox{/}\mkern-.13.5mu D}
\def\b#1{\mathbf{#1}}
\newcommand{\proj}[1]{\ket{#1}\bra{#1}}
\def\braket#1#2{\langle #1 | #2 \rangle}
\def\1s2#1{\frac{1}{\sqrt{2^{#1}}}}
%
%
\def\th{\theta}		\def\Th{\Theta}
\def\ga{\gamma}		\def\Ga{\Gamma}
\def\be{\beta}
\def\al{\alpha}
\def\ep{\epsilon}
\def\vep{\varepsilon}
\def\la{\lambda}	\def\La{\Lambda}
\def\de{\delta}		\def\De{\Delta}
\def\om{\omega}		\def\Om{\Omega}
\def\sig{\sigma}	\def\Sig{\Sigma}
\def\ph{\varphi}
\def\d{\delta}
                        \def\Up{\Upsilon}
%
%
\def\CA{{\cal A}}	\def\CB{{\cal B}}	\def\CC{{\cal C}}
\def\CD{{\cal D}}	\def\CE{{\cal E}}	\def\CF{{\cal F}}
\def\CG{{\cal G}}	\def\CH{{\cal H}}	\def\CI{{\cal J}}
\def\CJ{{\cal J}}	\def\CK{{\cal K}}	\def\CL{{\cal L}}

\def\CM{{\cal M}}	\def\CN{{\cal N}}	\def\CO{{\cal O}}
\def\CP{{\cal P}}	\def\CQ{{\cal Q}}	\def\CR{{\cal R}}
\def\CS{{\cal S}}	\def\CT{{\cal T}}	\def\CU{{\cal U}}
\def\CV{{\cal V}}	\def\CW{{\cal W}}	\def\CX{{\cal X}}
\def\CY{{\cal Y}}	\def\CZ{{\cal Z}}

\def\rvac{\hbox{$\vert 0\rangle$}}
\def\lvac{\hbox{$\langle 0 \vert $}}
\def\comm#1#2{ \BBL\ #1\ ,\ #2 \BBR }
\def\2pi{\hbox{$2\pi i$}}
\def\e#1{{\rm e}^{^{\textstyle #1}}}
\def\grad#1{\,\nabla\!_{{#1}}\,}
\def\dsl{\raise.15ex\hbox{/}\kern-.57em\partial}
\def\Dsl{\,\raise.15ex\hbox{/}\mkern-.13.5mu D}
\def\beq{\begin {equation}}
\def\eeq{\end {equation}}
\def\to{\rightarrow}
\def\h#1{\widehat{#1}}
\def\inn{{\rm in}}
\def\out{{\rm out}}
\def\sim{{\rm sim}}
\def\ave{{\rm ave}}
\def\br#1{\langle #1 \rangle}
\def\ox{\otimes}
\newtheorem{lem}{Lemma}
\newtheorem{prop}{Proposition}
\newtheorem{theo}{Theorem}
\newtheorem{cor}{Corollary}

\newtheorem{rem}{Remark}
\newtheorem{dfn}{Definition}

\def\12{{\textstyle \frac 12}}
\renewenvironment{proof}[1][Proof]{\emph{{#1~}\!\!:}}{\QED}

\def\bC{\mathbb{C}}
\def\diag{\mbox{diag}}
\def\nn{\nonumber}
\def\bs#1{\boldsymbol{#1}}
\def\id{\text{\rm id}}

\def\argmax{\mbox{argmax}}
\definecolor{gray}{gray}{.9}
\def\com#1{\vspace{.1in}\fcolorbox{black}{gray}{\begin{minipage}{5.5in}#1\end{minipage}}\vspace{.1in}}

\def\deph{\Delta}
\def\lmin{\lambda_{\text{min}}}
\def\lmax{\lambda_{\text{max}}}
\def\abs{{\rm abs}}

\def\sumk{ {\frac{1}{\kappa}\sum_{k=1}^\kappa}}

\title{Optimal quantum source coding with quantum side information at the encoder and decoder}
\author{
Jon Yard$^*$, 
 \thanks{$*$  {\tt jtyard@lanl.gov}, 
Institute for Quantum Information, California Institute of Technology, Pasadena, California, USA, CNLS (Quantum Initiative), CCS-3, Los Alamos National Laboratory, Los Alamos, NM, USA}
Igor Devetak$^\dagger$ 
 \thanks{$\dagger$  {\tt devetak@usc.edu},
 Electrical Engineering Department, University of Southern California, USA}
}
\date{May 22, 2007}
\maketitle
\begin{abstract} 
Consider many instances of an arbitrary quadripartite pure state of four quantum systems $ABCD$.  Alice holds the $AC$ part of each state, Bob holds $B$, while $D$ represents all other parties correlated with $ABC$.  Alice is required to redistribute the $C$ systems to Bob while asymptotically preserving the overall purity.  We prove that this is possible using $Q$ qubits of communication and $E$ ebits of shared entanglement between Alice and Bob, provided that
$Q~\!\geq~\!\12I(C;D|B)$~and~$Q~\!+~\!E~\!\geq~\!H(C|B),$
proving the optimality of the Luo-Devetak outer bound.  The optimal qubit rate provides the first known operational interpretation of quantum conditional mutual information.  We also show how our protocol leads to a fully operational proof of strong subadditivity and uncover a general organizing principle, in analogy to thermodynamics, that underlies the optimal rates.  
\end{abstract}

\begin{keywords}
Quantum information, source coding, side information.
\end{keywords}

\section{Introduction}
\PARstart{T}{he} most fundamental problem in communication theory is the two-terminal source coding problem.  Here one user, say Alice, attempts to describe a source of information to another user, who we call Bob.  If the information source is modeled by a sequence of independent and identically distributed (i.i.d.) random variables $X$, one can ask for the ultimate rate at which the source can be described, in units of bits per sample.  It is required that Alice's description allow Bob to perfectly recreate the source sequence with high probability, although decreasing the error probability generally requires block coding on longer source sequences.   
According to Shannon's noiseless channel coding theorem \cite{shannon}, this ultimate rate is given by the \emph{Shannon entropy}
\[H(X) = -\sum_x p(x)\log p(x).\]
Intuitively, Shannon entropy can be understood as a measure of the information contained in the random variable $X$.  Because Shannon entropy answers the question regarding the optimal rate for data compression, one says that the corresponding protocol for data compression provides an \emph{operational interpretation} of Shannon entropy.

Suppose now that Bob had some \emph{a priori} information about $X$, in the form of a correlated random variable $Y$.  In this case, Slepian and Wolf demonstrated \cite{SW71} that Alice would only need to send to Bob at a rate given by the \emph{conditional entropy}
\[H(X|Y) = H(XY) - H(Y)\]
and that surprisingly, Alice would not need to know Bob's side information to accomplish this task.  The so-called Slepian-Wolf protocol for data compression with side information provides an operational interpretation of conditional entropy.  Intuitively, one thinks of $H(X|Y)$ as a measure of the information that is to be gained by learning $X$ for one who already knows $Y$.  Note that there is no advantage if Alice has additional side information regarding $X$, and that shared common randomness between Alice and Bob is also of no help.  

In this paper, we provide a complete solution to a general quantum counterpart of the above scenario.  We find that, in contrast to the classical case, additional Alice side information changes the problem, while quantum mechanical entanglement between Alice and Bob, the quantum analog of shared common randomness, is a useful resource.  Our problem is \emph{fully quantum} in a sense introduced by Schumacher \cite{Sch95}, where Alice is asked to transfer part of a pure quantum state to Bob, while preserving the purity of the global state.    
For this, we consider a pure state of four quantum systems $\ket{\psi}^{ABCD}$.   Initially, the $A$ and $C$ systems are held by Alice, while $B$ is in the possession of Bob.  We refer to $D$ as the \emph{reference system} and assume that it is inaccessible to both Alice and Bob.  We determine the cost for Alice and Bob to ``redistribute" the state, so that it is Bob who holds $C$ instead of Alice, thereby transferring the quantum information in $C$ to Bob.  Specifically, we analyze the corresponding asymptotic scenario, asking that many copies of the same state be redistributed as above, while requiring that the redistributed states have arbitrarily high fidelity with the originals in the asymptotic limit.

To achieve this task, we allow the use of two fundamental quantum mechanical resources.  First, Alice may send \emph{qubits} (two-level quantum systems) to Bob over a noiseless quantum channel.  Second, we allow Alice and Bob to use pre-existing entanglement, shared between themselves in the form of Bell states 
\[\ket{\Phi^+} = \frac{1}{\sqrt{2}}\big(\ket{00} + \ket{11}\big).\]
We refer to such a state as an \emph{ebit} (entangled bit).
We do not separately consider classical communication, because it can be used with entanglement to simulate qubit channels via teleportation.
The asymptotic cost to redistribute $C$ as above is given in terms of the number $Q$ of qubits sent and the number $E$ of ebits consumed, per copy of the state.  We allow the entanglement cost $E$ to be negative, in which case the corresponding protocol generates entanglement rather than consume it.  Our main result (Theorem~\ref{theo:main}) proves the optimality of the Luo-Devetak outer bound \cite{LD06} for this problem, demonstrating that it is possible to redistribute the state $\ket{\psi}^{ABCD}$ as above if and only if 
\begin{IEEEeqnarray}{rClrCl}
Q &\geq& \12 I(C;D|B), & \hspace{.3in}
Q + E &\geq& H(C|B).
\label{eqn:mainregion}
\end{IEEEeqnarray}
This region is depicted in Figure~\ref{fig:mainregion}.
The quantities in these bounds,  \emph{conditional mutual information} and \emph{conditional entropy}, are defined in Section~\ref{section:notation}.   Simultaneously minimizing the qubit rate $Q$ and the total sum rate $Q + E$ gives the \emph{optimal cost pair}
\begin{IEEEeqnarray}{rClrCl}
Q^* &=& \12 I(C;D|B) & \hspace{.2in} E^* &=& \12 I(A;C) - \12I(B;C).
\label{eqn:corner}
\end{IEEEeqnarray}
The optimal qubit cost gives the first known operational interpretation of quantum conditional mutual information.  In Section~\ref{section:ssad}, we show that $Q^*$ cannot be negative, which leads to an operational proof of the celebrated strong subadditivity inequality \cite{LR73}.  This proof differs from other such operational proofs \cite{GPW05,HOW05}  in that it follows solely from a direct coding theorem and not from a converse proof. 
In \cite{DY06b}, where our main result was first announced, we showed that $Q^*$ is symmetric under time-reversal, where now Bob redistributes $C$ back to Alice, while $E^*$ is anti-symmetric.  The former gives an intuitive understanding to the curious identity 
\[I(C;D|A) = I(C;D|B),\]
which holds on every quadripartite pure state.  
We comment further on this feature in Section~\ref{section:discussion}.
We also demonstrated there that the corresponding protocol is \emph{perfectly composable}. This constitutes an exact solution to a quantum analog of result of Cover and Equitz \cite{CE91} on the successive refinement of classical information, although the classical problem is only known to be exactly soluble in the presence of a Markov condition. 

\begin{figure}
\begin{centering}
\hspace{.2in} \includegraphics[scale=.5]{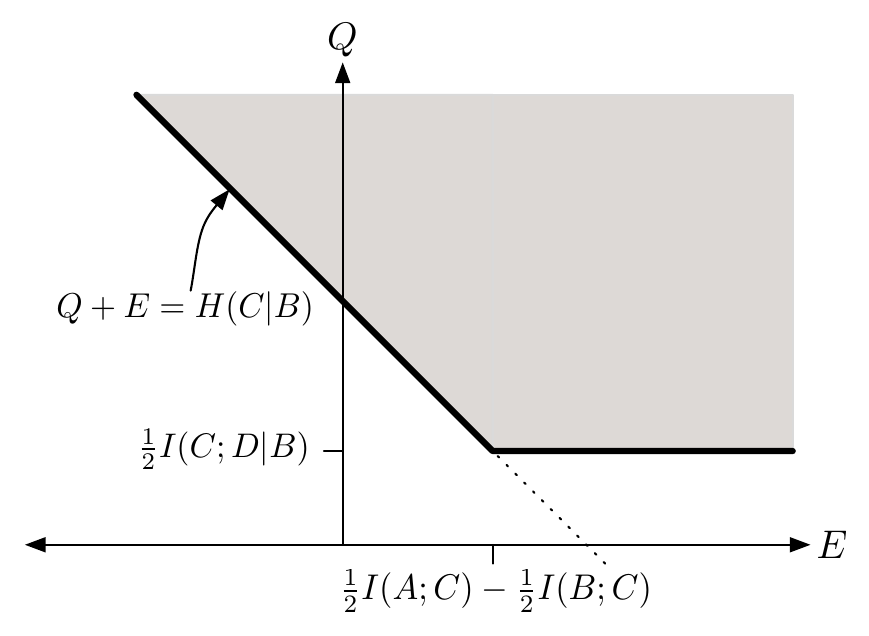}
\end{centering}
\caption{The shaded region represents contains the cost pairs from (\ref{eqn:mainregion}) at which it is possible to  redistribute the $C$ part of $\ket{\psi}^{ABCD}$ from Alice to Bob.  The figure corresponds to the case $I(A;C) > I(B;C)$; otherwise the corner point,  which corresponds to the optimal cost pair in (\ref{eqn:corner}), would be in the upper-left quadrant.}
\label{fig:mainregion}
\end{figure}

By assuming that various subsystems are trivial, the state redistribution problem generalizes numerous tasks that were previously considered in the literature while giving an optimal protocol suited for any and all of them.    
As we discuss in Section~\ref{section:discussion} (in particular see Figure~\ref{figure:specialcases}) and also during the proof of our main theorem in Section~\ref{sec:fqrsproof}, these tasks include Schumacher compression \cite{Sch95}, state merging and splitting \cite{HOW05,HOW05b, D05b, ADHW06}, and entanglement concentration and dilution \cite{BBPS96}.  We depart from previous nomenclature with regard to the merging and splitting problems; our convention for this paper is detailed in Section~\ref{sec:fqrsproof}. 

The paper is organized as follows.  In the next subsection we fix our notational conventions.  The following section gives an introduction to the resource calculus.  There we also formally state the main result, Theorem~\ref{theo:main}, which is proved in Section~\ref{section:mainproof}.  In Section~\ref{section:ssad},
we show how our results yield a fully operational proof of strong subaddivity which, unlike previous operational proofs, is logically independent even from the subadditivity of entropy.
We conclude with a discussion in Section~\ref{section:discussion} where we reflect on the main result and provide a novel thermodynamic interpretation of the optimal rates.      


\subsection{Notational conventions \label{section:notation}}
Throughout this paper, we assume familiarity with standard background material in quantum information theory; for a general reference, the reader is referred to \cite{NC00a}.
We use capital Roman letters such as $A,B,C$ to denote Hilbert spaces.  We write $|A|$ for the dimension of $A$ and use a superscripted label to associate a state to a Hilbert space, by writing $\rho^A$ or $\ket{\ph}^A$.   Computational basis states of $A$ are denoted with lower case Roman letters as in $\{\ket{i}^A\}$.
Tensor products of Hilbert spaces are written $AB = A\ox B$.  Given a pure state $\ket{\ph}^{AB}$, we abbreviate $\ph^{AB} =  \proj{\ph}^{AB}$, while writing its partial traces as $\ph^A = \Tr_B \ph^{AB}$.  We write $\pi^A$ for the maximally mixed state on $A$, and given two isomorphic Hilbert spaces $A$ and $A'$, we write 
\[\ket{\Phi}^{AA'} = \frac{1}{\sqrt{|A|}}\sum_{i=1}^{|A|} \ket{i}^A\ket{i}^{A'}\]
for the unique maximally entangled state associated with the isomorphism $\ket{i}^{A} \mapsto \ket{i}^{A'}$.  
A \emph{quantum channel} is a completely positive, trace-preserving linear map $\CN^{A\to B}$ from density matrices on $A$ to those on $B$.  Given an isometry $\CV^{A\to B}$, we will abbreviate its adjoint action on density matrices as $\CV(\rho) \equiv \CV \rho \CV^\dagger$.  A \emph{partial isometry} is an isometry when restricted to its support subspace.  

For the von Neumann entropy of a density matrix $\ph^A$ we write 
\[H(A) \equiv -\Tr\ph^A\log_2\ph^A.\]
When the underlying state could be ambiguous we write $H(A)_\ph$.  Given a multipartite state $\ph^{ABC}$, various entropic quantities can be defined in exact analogy to the classical case (see e.g.\ \cite{CT91a}).  \emph{Quantum conditional entropy} is defined \cite{CA97a} as 
\[H(A|B) = H(AB) - H(B),\]
\emph{quantum mutual information} \cite{CA97a} is 
\[I(A;B) = H(A) + H(B) - H(AB)\]
and \emph{quantum conditional mutual information} is given by 
\[I(A;B|C) = H(A|C) + H(B|C) - H(AB|C).\]
Observe that the conditional quantities above cannot generally be interpreted as averages, unless the conditioning system is purely classical.   Furthermore, notice that conditional entropy can in fact be negative, as it is for any pure entangled state on $AB$.  On the other hand, $I(A;B|C)$ is never negative, a fact that is known as \emph{strong subaddivity} \cite{LR73}.  In Section~\ref{section:ssad}, we show how our main result leads to a self-contained proof of strong subadditivity.  


\section{Resource inequalities}
It will be convenient for us to use the high-level notation of  \emph{resource inequalities} \cite{DHW04,DHW05} to express our main result, as well as to describe various intermediate protocols introduced during the proof.  We use a more elementary formulation than \cite{DHW05} which is nonetheless sufficient for our purposes. 
\subsection{Finite resource inequalities}
A single ebit shared between Alice and Bob is denoted $[qq]$.  The notation $[q\!\to \!q]$ represents a noiseless qubit channel from Alice to Bob, while a noiseless classical bit channel is written $[c\to c]$.  A \emph{finite resource inequality} is an expression such as 
\[[q\to q] \geq [c\to c], \hspace{.2in} [q\to q] \geq [qq]\]
meaning that the resource on the left can simulate the one on the right.  The above two examples respectively signify that a qubit channel can be used to send classical bits (by signaling with orthogonal pure states), or otherwise can be used to distribute entanglement (by transmitting halves of locally prepared ebits).    
Addition of two resources may be regarded as having each of them available.  In this way, for instance, the existence of the quantum teleportation and superdense coding protocols are proofs of the respective finite resource inequalities 
\begin{IEEEeqnarray}{rCl}
[qq] + 2[c\to c] \geq [q\to q], \hspace{.1in} [qq] + [q\to q] \geq 2[c\to c].\hspace{.1in} \label{eqn:RIexamples}
\end{IEEEeqnarray}


\subsection{Approximate resource inequalities}
Given two quantum states $\rho$ and $\sig$ of the same quantum system, we may judge their closeness using either the trace distance $\norm{\rho - \sig}_1$ or the \emph{fidelity}  $F(\rho,\sig) = \Norm{\sqrt{\rho}\sqrt{\sig}}_1^2$.  Note that when one of the states is pure, $F(\ket{\ph},\sig) = \bra{\ph}\sig\ket{\ph}$.
 A useful characterization of fidelity -- \emph{Uhlmann's theorem} -- says that if $\ket{\psi}$ is a purification of $\rho$, then $F(\rho,\sig)$ is the maximum of $|\braket{\psi}{\phi}|^2$ over all purifications $\ket{\phi}$ of $\sig$.   Fidelity and trace distance related by the inequalities 
\begin{IEEEeqnarray}{rCl}
F(\rho,\sig) &\geq& 1 - \norm{\rho - \sig}_1 \label{eqn:tr2fid} \\
\norm{\rho - \sig}_1 &\leq& 2\sqrt{1-F(\rho,\sig)}.  \label{eqn:fid2tr}
\end{IEEEeqnarray}
Therefore, fidelity and trace distance are equivalent distance measures when one is interested in arbitrarily good approximations of states as we are here.  
An \emph{approximate resource inequality} 
\[\sum_i a_i \geq_\ep \sum_j b_j\]
is a finite resource inequality that holds with an error of $\ep$ in the following sense.  Consider acting on half of a maximally entangled state with each target resource $b_j$ that is a channel, and call the resulting global state $\Omega$.  Note that $\Omega$ should also contain the $b_j$ that are quantum states.  Now, let $\Omega'$ be the simulated version of this state, obtained by using the resources $a_i$.  We require that $\Om$ and $\Om'$ are $\ep$-close in either trace distance or fidelity.  The particular measure is not important, as we are ultimately concerned with asymptotics, where $\ep$ can be arbitrarily small.  


\subsection{Asymptotic resource inequalities}
The notion of a finite resource inequality can be generalized to that of an \emph{asymptotic resource inequality}.  This is a formal expression of the form 
\begin{IEEEeqnarray}{rCl}
\sum_i R_\inn^{(i)}a_i \succeq \sum_j R_\out^{(j)}b_j. \label{eqn:RI}
\end{IEEEeqnarray}
Here the $a_i$ and $b_j$ are resources and the rates $R_\inn^{(i)}$ and $R_\out^{(j)}$ are nonnegative real numbers.  We shall consider the inequality (\ref{eqn:RI}) to be shorthand for the following formal statement:  for every $\ep > 0$, every set of rates $R'^{(i)}_\inn > R^{(i)}_\inn$, $R'^{(i)}_\out < R^{(i)}_\out$ and all sufficiently large $n$, the approximate resource inequality 
\begin{IEEEeqnarray*}{rCl}
\sum_i \lfloor nR_\inn'^{(i)}\rfloor a_i \geq_\ep \sum_j \lfloor nR_\out'^{(j)}\rfloor b_j 
\end{IEEEeqnarray*}
holds.  Below, we use Greek letters to denote linear combinations of finite resources that appear in asymptotic resource inequalities.  In some asymptotic resource inequalities, we may only require a sublinear amount $o(n)$ of a particular input resource.  In such cases, we write $o a + \beta \succeq \gamma $ if we have $Ra + \beta\succeq \gamma$ for every $R > 0$. 

It will also be convenient for us extend the definition of asymptotic resource inequalities to have negative rates on the left.  Such rates are interpreted as meaning that the corresponding resources are {generated} rather than consumed.  Formally, these resources should be negated and moved to the right.   Let us introduce two powerful lemmas that are the raison d'\^{e}tre for the entire formalism of asymptotic resource inequalities and which play important roles in our proofs.
\begin{lem}[Composition lemma \cite{DHW05}]
\[\al \succeq \beta \text{ and } \beta \succeq \gamma \Rightarrow \al \succeq \gamma.\]
\label{lem:composition}
\end{lem}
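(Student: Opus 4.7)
The plan is to unpack the asymptotic definition and then compose two near-perfect simulations with controlled error propagation. Fix a target error $\ep > 0$ and rate slacks for $\al \succeq \gamma$. Starting from $\beta \succeq \gamma$, I would extract, for sufficiently large block length $n$, a finite approximate resource inequality whose input rates slightly exceed those of $\beta$, whose output rates slightly undercut those of $\gamma$, and whose simulation error is at most $\ep/2$. Call this outer protocol $\Pi_2$. Similarly, from $\al \succeq \beta$, for each coordinate of $\beta$ and every block length $m$, I get an inner protocol $\Pi_1$ whose output rates exceed the input rates demanded by $\Pi_2$ (this is where the strict inequality $R'_\inn > R_\inn$ in the definition is crucial) and whose simulation error is at most $\ep/2$.

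The next step is to choose parameters consistently. I would match block lengths so that $\Pi_1$ produces at least the number of finite copies of each resource of $\beta$ that $\Pi_2$ demands for its block size $n$; because $\Pi_2$ asks for only $\lfloor n R_\inn^{(j)} \rfloor$ copies while $\Pi_1$ delivers $\lfloor m R_\out'^{(j)}\rfloor$ copies at rates strictly above $R_\inn^{(j)}$, a suitable $m = m(n)$ exists for all large $n$. Surplus copies can be discarded, or equivalently, the composition can be padded with trivial resources without affecting correctness.

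The composed protocol is then $\Pi_2 \circ \Pi_1$: run $\Pi_1$ on the available $\al$-resources to produce an $\ep/2$-approximation $\tilde\beta$ of the $\beta$-resources required by $\Pi_2$, then run $\Pi_2$ on $\tilde\beta$. The final step is the error estimate. Let $\Om_\gamma$ be the ideal global state produced by a perfect execution, $\Om_2$ the state produced by $\Pi_2$ on ideal $\beta$, and $\Om_{12}$ the state produced by the full composition. By the definition of approximate resource inequalities, $\norm{\Om_\gamma - \Om_2}_1 \leq \ep/2$. Since $\Pi_2$ is itself a sequence of CPTP maps acting on its inputs, monotonicity of the trace norm under quantum operations gives $\norm{\Om_2 - \Om_{12}}_1 \leq \norm{\beta - \tilde\beta}_1 \leq \ep/2$. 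The triangle inequality then yields $\norm{\Om_\gamma - \Om_{12}}_1 \leq \ep$, establishing the required approximate resource inequality at block length $n$, for all $n$ large enough.

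The main obstacle is bookkeeping rather than any deep inequality: one has to verify that the strict rate slackness in the definition of $\succeq$ provides enough room to absorb the matching of output rates of $\Pi_1$ to input rates of $\Pi_2$, uniformly in the target tolerances, and that the case of sublinear (``$o$'') resources and negative (generated) rates goes through by the same padding/negation conventions already built into the definition. The error-combination step itself is the standard triangle-inequality argument enabled by monotonicity of trace distance under CPTP maps, noting that an equivalent argument works with fidelity via (\ref{eqn:tr2fid})--(\ref{eqn:fid2tr}).
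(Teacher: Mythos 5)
The paper itself gives no proof of this lemma --- it is imported from \cite{DHW05} --- so your proposal has to be measured against the standard argument. Your skeleton (unpack the definition, use the strict slack $R'_\inn > R_\inn$ to match the inner protocol's outputs to the outer protocol's demands, discard surpluses, add up errors) is the right one, and for purely static resources (shared states) your error estimate is fine. The genuine gap is the step $\norm{\Om_2 - \Om_{12}}_1 \leq \norm{\beta - \tilde\beta}_1 \leq \ep/2$. First, $\beta$ contains dynamic resources (channels), so ``$\norm{\beta-\tilde\beta}_1$'' is not a defined quantity; what $\al \succeq \beta$ actually guarantees, under the simulation criterion adopted in this paper, is that the global states obtained by feeding the target channels halves of \emph{maximally entangled test states} are close, i.e.\ closeness of Choi-type states. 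Second, that guarantee does not control the error when $\Pi_2$ uses the simulated channels on its own (possibly adaptively chosen) inputs: Choi-state closeness bounds worst-case behavior only up to a factor of the input dimension, which here is exponential in $n$, while the definition of $\succeq$ only lets you make the inner error an arbitrarily small constant, not exponentially small. Concretely, a simulation of $\lfloor nR\rfloor$ ideal qubit channels that acts perfectly except on a subspace of dimension $\ep\, 2^{\lfloor nR\rfloor}$ has Choi error $O(\ep)$ yet can fail completely if the outer protocol happens to route its information through that subspace; monotonicity of trace distance plus the triangle inequality cannot exclude this, so the step as written would fail.

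The missing idea is one of the following. Either one adopts a uniform (diamond-norm type) simulation criterion for dynamic resources, as \cite{DHW05} does, in which case your monotonicity argument becomes literally valid and also handles adaptive, round-by-round use of the simulated channels; or one keeps the maximally-entangled-test criterion and inserts an expurgation step, showing that a simulation of $\lfloor nR \rfloor$ noiseless qubit channels with high entanglement fidelity contains a slightly smaller subsystem, of $nR - o(n)$ qubits, on which the \emph{worst-case} fidelity is high, which leaves the asymptotic rates unchanged. With either repair in place, the rate bookkeeping, the $o(n)$ resources, and the convention for negative (generated) rates go through as you describe.
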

\begin{lem}[Cancellation lemma \cite{DHW05}] 
Given rates that satisfy  $R_\inn > R_\out \geq 0$,  
\[R_\inn a + \beta \succeq R_\out a + \gamma \Rightarrow (R_\inn -R_\out)a + \beta \succeq \gamma.\]
Otherwise, if $R_\out \geq R_\inn \geq 0$, then 
\[R_\inn a + \beta \succeq R_\out a + \gamma \Rightarrow  o a + \beta\succeq (R_\out -R_\inn)a +\gamma\]
\label{lem:cancellation}
\end{lem}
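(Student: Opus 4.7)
The plan is to unpack the asymptotic resource inequality $R_\inn a+\beta\succeq R_\out a+\gamma$ into its defining family of approximate finite inequalities for large $n$, then \emph{pipeline} the given protocol: apply it $k$ times sequentially so that the $a$'s produced by round $j$ are re-used as $a$-inputs for round $j+1$. This cancels $\min(R_\inn,R_\out)$ copies of $a$ on both sides, leaving only $O(n)$ endpoint corrections that vanish once we divide by the total scale $kn$. The two cases of the lemma correspond to whether the per-round net $a$-flow is positive (consumed) or negative (produced).

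For Case 1 ($R_\inn>R_\out$), I will fix a target rate $R''>R_\inn-R_\out$ and target error $\ep'>0$, pick $R'_\inn$ slightly larger than $R_\inn$ and $R'_\out$ slightly smaller than $R_\out$ with $R'_\inn-R'_\out<R''$, and invoke the hypothesis with per-round tolerance $\ep=\ep'/k$ for an integer $k$ to be chosen. Each round then consumes $\lfloor nR'_\inn\rfloor$ copies of $a$ together with appropriately scaled $\beta$ and produces $\lfloor nR'_\out\rfloor$ copies of $a$ together with scaled $\gamma$, within error $\ep$. Chaining $k$ rounds and cancelling the $a$'s that appear on both sides, the procedure requires $k(\lfloor nR'_\inn\rfloor-\lfloor nR'_\out\rfloor)+\lfloor nR'_\out\rfloor$ fresh $a$'s and ends with $\lfloor nR'_\out\rfloor$ leftover $a$'s, alongside $k$ copies each of $\beta$- and $\gamma$-stuff. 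Dividing by $kn$ yields amortized fresh $a$-rate $R'_\inn-R'_\out+O(1/(kn))$, which sits below $R''$ for large $k,n$, while the stray leftover $a$'s amount to rate $O(1/k)$ that is absorbed into the slack. This delivers $(R_\inn-R_\out)a+\beta\succeq\gamma$.

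Case 2 ($R_\out\geq R_\inn$) is the easier analogue of the same pipelining: each round now produces at least as many $a$'s as it consumes, so only the initial round needs fresh $a$-input and every subsequent round is bootstrapped from the previous round's output, with the per-round surplus $\lfloor nR'_\out\rfloor-\lfloor nR'_\inn\rfloor$ set aside as net production. The total fresh $a$-input over $k$ rounds is the single initial batch of $\lfloor nR'_\inn\rfloor$ copies, so the amortized fresh $a$-rate is $R_\inn/k\to 0$ as $k\to\infty$, yielding the sublinear $o\,a$ on the left, while the net $a$-output rate approaches $R_\out-R_\inn$ as the primed rates approach the unprimed.

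The principal technical obstacle is bookkeeping for the compounding of error across the $k$ chained rounds. Because round $j+1$ is fed slightly-noisy outputs of round $j$ rather than ideal inputs, I need the contractivity of trace distance (equivalently fidelity, via (\ref{eqn:tr2fid})--(\ref{eqn:fid2tr})) under the round-$(j{+}1)$ quantum operation to guarantee that the global error grows only additively to $k\ep$ rather than faster. The parameters are then coupled by first fixing $k$ large enough to amortize the rate and then choosing $n$ large enough (depending on $k$) that each round is $(\ep'/k)$-accurate; the $O(1/n)$ corrections introduced by the floor functions are absorbed into the slack between primed and unprimed rates.
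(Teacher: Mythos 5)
The paper itself offers no proof of Lemma~\ref{lem:cancellation} — it is quoted from \cite{DHW05} — and your amortized chaining strategy (run the block-$n$ protocol $k$ times, recycle the $a$'s produced by round $j$ as inputs to round $j+1$, fix $k$ before taking $n$ large, absorb the endpoint corrections into the slack between primed and unprimed rates) is indeed the standard route taken in that reference. Your rate bookkeeping in Case 1 is essentially right, though the initial/final stray batches contribute $O(1/k)$ to the amortized rate rather than $O(1/(kn))$, which is harmless. In Case 2, note that since you must take $R'_\inn > R_\inn$ and $R'_\out < R_\out$, it can happen that $\lfloor nR'_\out\rfloor < \lfloor nR'_\inn\rfloor$ (e.g.\ when $R_\inn = R_\out$), so every round, not only the first, may need a small fresh infusion; the amortized fresh rate is then $(R'_\inn - R'_\out) + R'_\inn/k$, which is still arbitrarily small and so still yields the $o\,a$ term, but the claim ``only the initial round needs fresh $a$-input'' is not literally correct.

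The genuine gap is in your error-propagation step. Under this paper's definition of $\geq_\ep$, a channel appearing on the right-hand side is only guaranteed to be simulated well when its input is half of a maximally entangled state (a Choi-state test). The recycled simulated copies of $a$ from round $j$ are used by round $j+1$ on whatever states its encoder produces, not on maximally entangled halves, and closeness of Choi states does not control a channel's action on arbitrary inputs without dimension factors: a map that fails completely on a single input direction of a $2^n$-dimensional space still has Choi fidelity $1-O(2^{-n})$ with the identity. So ``contractivity of trace distance under the round-$(j{+}1)$ operation'' does not by itself give the additive $k\ep$ bound — you are comparing two nearby \emph{maps} applied to the same state, not one map applied to two nearby states, and the hypothesis does not make them nearby in the worst-case (diamond-norm) sense. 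Closing this requires an extra ingredient: either adopt a worst-case simulation criterion for the recycled channel resources, or upgrade the Choi-state guarantee, for instance by Pauli-twirling the recycled channels using shared randomness and then derandomizing the entire chained protocol at the end (the final figure of merit is an average fidelity, so some fixed value of the randomness performs at least as well as the average). This is precisely the sort of care taken in the composability machinery of \cite{DHW05}; without it, the step as stated would fail.
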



\subsection{Distributed states}
In Schumacher data compression, Alice wishes to transmit the $C$ parts of many instances of the state $\ket{\psi}^{CD}$ to Bob while asymptotically preserving the entanglement with $D$.  We introduce the following notation to describe the corresponding coding theorem:
\[\psi^{C|\emptyset} + H(C) [q\to q] \succeq \psi^{\emptyset|C}.\]
The notation $\psi^{C|\emptyset}$ indicates that Alice holds the $C$ parts of many i.i.d.\ instances of some fixed purification $\ket{\psi}^{CD}$ of the density matrix $\psi^C$, while Bob holds nothing.  On the right, the expression $\psi^{\emptyset|C}$ refers to the same purifications as on the left, only it is Bob who is holding the $C$ systems.  In other words, Alice attempts to simulate identity channels from the systems $C$ in her lab to identical systems $C$ located in Bob's lab.  This channel is only required to work well when the input is equal to $\psi^C$.  In \cite{DHW05}, the formalism of \emph{relative resources} was introduced for these purposes, though our alternate notation is sufficient for our needs.   
State redistribution involves a purification $\ket{\psi}^{ABCD}$ of a tripartite density matrix $\psi^{ABC}$.  We denote the distributed states before and after the protocol as $\psi^{AC|B}$ and $\psi^{A|CB}$  since Alice begins by holding $AC$ and ends by only holding $A$.  The rates in an asymptotic resource inequality involving such distributed states will in general be entropic expressions evaluated on the implicit but arbitrary purification into a reference system $D$.
Using this notation, we again state the main result:
\begin{theo}
\begin{IEEEeqnarray}{rCl}
\psi^{AC|B} + Q [q\to q] +   E [qq] \succeq \psi^{A|CB}
\label{eqn:main}
\end{IEEEeqnarray}
 if and only if $Q$ and $E$ satisfy (\ref{eqn:mainregion}), i.e.\ are contained in the region depicted in Figure~\ref{fig:mainregion}. 
\label{theo:main}
\end{theo}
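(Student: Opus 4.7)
The converse direction — that any achievable $(Q,E)$ must lie in the region (\ref{eqn:mainregion}) — is already established in \cite{LD06}, so the task reduces to direct coding. My overall strategy is to first build a protocol that attains the single optimal corner $(Q^*,E^*)$ of (\ref{eqn:corner}) and then extend to the full region by Lemmas~\ref{lem:composition} and \ref{lem:cancellation}: an extra qubit of communication can always be traded for an extra ebit simply by using the qubit channel to distribute entanglement, so composing the corner protocol with such trades moves the operating point along the line $Q+E=\text{const}$ and fills out the region in Figure~\ref{fig:mainregion} in both directions.

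For the corner point itself I would follow the decoupling-plus-Uhlmann template that proved successful for FQSW and FQRS in \cite{D05b,ADHW06}. On $n$ i.i.d.\ copies of $\ket{\psi}^{ACBR}$, Alice first projects onto typical subspaces of $C^n$, appends (a portion of) a pre-shared maximally entangled state with Bob of appropriate log-rank, applies a Haar-random unitary to her enlarged register, and partitions the output into a subregister $C_1$ of log-dimension $\approx nQ^*$ that is transmitted through the qubit channel and a complementary subregister that remains in her lab as Alice's half of the final entanglement. A conditional decoupling estimate --- the key technical statement --- would then guarantee that after tracing out the transmitted register $C_1$, the marginal on Alice's remaining system and the reference $R^n$ is close in trace norm to the product $\psi^{A^n}\oti\tau^{R^n}$, even though Bob holds $B^n$.

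Once this decoupling holds, Uhlmann's theorem supplies an isometry on Bob's side of the post-communication cut, acting on $C_1 B^n$ together with Bob's half of the pre-shared entanglement, which rotates his share into the target purification $\ket{\psi}^{A|CB}$ tensored with the residual entanglement, completing the protocol up to vanishing error. The dimension accounting of the random split, once expressed in terms of conditional typical subspaces, delivers exactly $Q=Q^*=\12 I(R;C|B)$ qubits transmitted per copy and $E=E^*=\12 I(C;A)-\12 I(C;B)$ ebits per copy net consumed (with $E^*<0$ simply meaning entanglement is generated, as permitted by the resource formalism).

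The principal obstacle is the conditional decoupling lemma itself. Standard decoupling bounds on $\E_U\Norm{\tr_{C_1} U\psi U^\dagger - \psi^A\oti\sigma}_1$ yield only the unconditional rate $\12 I(R;C)$ and do not exploit Bob's side information. To land on $\12 I(R;C|B)$ I expect one must compute the second Haar moment of the partially-traced state and then inject $B$ through an eigenvalue-pinching or conditional-typicality step, so that the effective dimensions appearing in the variance estimate are those of the $B$-conditional typical subspaces of $C^n$ rather than the unconditional ones. Getting this exponent to come out at precisely $\12 I(R;C|B)$ --- not a strictly larger quantity --- is where I anticipate the most care will be needed.
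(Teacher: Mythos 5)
Your outer frame matches the paper: the converse is quoted from \cite{LD06}, and it suffices to achieve the corner point $(Q^*,E^*)$ and then fill out the region using $[q\to q]\geq [qq]$ together with Lemmas~\ref{lem:composition} and~\ref{lem:cancellation}. The gap is that your ``key technical statement'' --- a conditional decoupling estimate with exponent exactly $\12 I(R;C|B)$ --- is not proved, and the route you sketch for it would not go through. A Haar-random unitary applied by Alice to her (typically projected) register cannot exploit a system she does not touch: the second-moment decoupling bound (cf.\ Lemma~\ref{lem:decoupling}) involves only the dimension of the transmitted register and norms of the state on Alice's coded system and the complementary environment, so including $B$ in the environment yields the rate $\12 I(C;RB)$, while excluding it gives no control over the correlations Bob's systems carry. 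Your proposed fix --- pinching or conditional typicality ``conditioned on $B$'' --- is not available to Alice, who does not hold $B^n$, and for a genuinely quantum $\psi^{CB}$ there is no $B$-conditional typical projector acting on $C^n$ alone. There is also a problem with the decoupling target you state: what Uhlmann's theorem requires is that Alice's retained register be close to maximally mixed and in tensor product with $A^nR^n$ \emph{jointly}, with the $\psi^{A^nR^n}$ marginal undisturbed; asking that Alice's remaining system be in product with $R^n$ (close to $\psi^{A^n}\ox\tau^{R^n}$) would destroy the $A$--$R$ correlations that the redistributed state must retain.

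The paper deliberately avoids the conditional decoupling lemma you are hoping for. It decouples \emph{unconditionally} against $BR$, i.e.\ runs an FQRS-style protocol using only Alice's side information, at gross rates $\12 I(C;RB)$ qubits and $\12 I(C;A)$ ebits, but uses $\kappa\approx 2^{nI(C;B)}$ random encodings rather than one. The pretty-good measurement together with the operator inequality of Lemma~\ref{lem:hayashi} shows Bob's side information lets him identify which encoding was applied, and the coherification lemma (Lemma~\ref{lem:coherent}) upgrades this to the simulation of $I(C;B)$ coherent channels $[q\to qq]$ (Theorems~\ref{theo:1shotfqrspiggyback} and~\ref{theo:fqrspiggyback}). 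Via (\ref{eqn:cobitidentity}) and the cancellation lemma these recoup $\12 I(C;B)$ qubits and $\12 I(C;B)$ ebits (Theorem~\ref{theo:aux}), leaving exactly the net rates $Q^*=\12 I(C;R|B)$ and $E^*=\12 I(C;A)-\12 I(C;B)$. Indeed, the paper's discussion explicitly lists a direct one-shot proof avoiding coherent channels and the cancellation lemma as an open desideratum, so the lemma you defer to ``careful bookkeeping'' is not a technicality but the entire difficulty your proposal leaves unresolved.
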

The converse part of the proof of Theorem~\ref{theo:main}, i.e.\ that $Q$ and $E$ must satisfy (\ref{eqn:mainregion}), is proved in \cite{LD06}. We thus focus on proving a coding theorem showing that (\ref{eqn:main}) is satisfied whenever $Q$ and $E$ satisfy (\ref{eqn:mainregion}).  Because $[q\to q] \geq [qq]$, it suffices for us   
to demonstrate (\ref{eqn:main}) for the corner point 
 $(Q^*, E^*)$ defined in (\ref{eqn:corner}).


\section{Proof of Theorem~\ref{theo:main} \label{section:mainproof}}

To prove Theorem~\ref{theo:main} we will demonstrate the existence of the following auxiliary protocol that transfers $C^n$ to Bob and has the desired net communication and entanglement cost:    
\begin{theo}
\begin{IEEEeqnarray*}{L}
\psi^{AC|B} + \12 I(C;BD)[q\to q] + \12 I(A;C) [qq] \hspace{2.5in}\\
\hspace{.8in} \succeq  \psi^{A|CB} + \12I(B;C)[q\to q] + \12I(B;C) [qq].
\end{IEEEeqnarray*}
\label{theo:aux} 
\end{theo}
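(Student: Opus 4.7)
The plan is to prove Theorem~\ref{theo:aux} via an explicit decoupling-based protocol on $n$ i.i.d.\ copies of $\ket{\psi}^{ACBR}$. Alice and Bob preshare a maximally entangled state $\ket{\Phi}^{T_A T_B}$ of Schmidt rank $2^{n \cdot \12 I(C;A)}$; Alice applies a Haar-random unitary $U$ to her entire register $(AC)^{\otimes n}\otimes T_A$ and decomposes the result as $C_1\otimes \tilde A$ with $\log|C_1|=n\cdot \12 I(C;RB)$, then sends $C_1$ to Bob using the $\12 I(C;RB)[q\to q]$ resource.

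The analytic core is a decoupling estimate: adapting the one-shot decoupling theorem from the FQSW proof of \cite{ADHW06}, I would show that on average over $U$ the marginal state on $\tilde A\otimes R^{\otimes n}$ is close in trace distance to $\pi_{\tilde A}\otimes (\psi^R)^{\otimes n}$. The required dimension count, combined with the pure-state entropic identities on $\ket{\psi}^{ACBR}$ such as $H(RB)=H(AC)$ and $H(CRB)=H(A)$, works out to exactly $\log|C_1|\geq n\cdot \12 I(C;RB)+o(n)$. Once decoupling holds, Uhlmann's theorem furnishes a local isometry on $C_1\otimes B^{\otimes n}\otimes T_B$ which rotates the global state close in fidelity to $\ket{\psi}^{\otimes n}$ with $C^{\otimes n}$ on Bob's side, tensored with pure ancilla shared between Alice and Bob.

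The specific output form $\12 I(C;B)[q\to q]+\12 I(C;B)[qq]$ is produced by partitioning Alice's retained register as $\tilde A=A_{\text{out}}^{\otimes n}\otimes \hat A\otimes T'$ with $\log|\hat A|=\log|T'|=n\cdot \12 I(C;B)$: the factor $T'$ is rotated into maximal entanglement with a matching factor on Bob's side (the $[qq]$ output), while $\hat A$ coherently pairs with another Bob-side factor in a manner that simulates a noiseless qubit channel (the $[q\to q]$ output), as encoded in the coherent-communication identity $2\,\text{cobits}=[q\to q]+[qq]$. Both outputs appear at the common rate $\12 I(C;B)$ because this is the entropy gap between the relevant typical subspaces dictated by the dimension counting on Bob's side.

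The main obstacle is precisely this last splitting. A cruder argument, for instance applying FQSW with reference $AR$ and then invoking the identity $\12 I(C;AR)=\12 I(C;RB)+\12[I(C;A)-I(C;B)]$, yields only the net corner-point rates and effectively requires the $\12 I(C;A)$ input to be delivered as $[q\to q]$ rather than as the weaker $[qq]$. Achieving the sharper aux form with the $[qq]$ input (together with a compensating $[q\to q]$ output) demands a careful tracking of the coherent structure of the random-encoding ancilla on both sides of the channel, so that the post-decoding ancilla separates into an entangled block and a coherent-channel block of equal rate. Once Theorem~\ref{theo:aux} is established, the corner point $(Q^*, E^*)$ of Theorem~\ref{theo:main} follows immediately from Lemma~\ref{lem:cancellation}, applied symmetrically to cancel the common $\12 I(C;B)[q\to q]$ and $\12 I(C;B)[qq]$ from both sides.
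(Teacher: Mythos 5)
There is a genuine gap, and it sits exactly where the theorem is hard. Your protocol applies the Haar-random unitary to Alice's \emph{entire} register $(AC)^{\otimes n}\otimes T_A$ and asks that the retained part $\tilde A$ decouple from $R^{\otimes n}$ alone. That is the merging/FQSW criterion, not the redistribution criterion: if $\tilde A$ is maximally mixed and product with $R^{\otimes n}$, then Uhlmann's theorem hands Bob an isometry reconstructing the \emph{entire} purification of $R^{\otimes n}$ on his side, i.e.\ Bob ends up holding (an isomorphic copy of) $A^{\otimes n}$ as well as $C^{\otimes n}$, whereas in redistribution Alice must end the protocol still holding $A^{\otimes n}$ correlated with $RB$. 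The dimension count also does not ``work out exactly'': decoupling the randomized system from $R^{\otimes n}$ costs roughly $\tfrac n2 I(AC;R)$ sent qubits (or $\tfrac n2 I(C;AR)$ if you randomize only $C^{\otimes n}T_A$, which is the correct thing to leave $A$ intact), and by the quadripartite identity $I(C;R|A)=I(C;R|B)$ this exceeds $\tfrac n2 I(C;RB)$ by exactly $\tfrac n2[I(C;A)-I(C;B)]$; the identities $H(RB)=H(AC)$, $H(CRB)=H(A)$ cannot close that gap. In other words, a decoupling argument that ignores Bob's side information provably cannot reach the rate $\12 I(C;RB)$, so the step you defer --- ``careful tracking of the coherent structure'' so that the post-decoding ancilla splits into an entangled block and a coherent-channel block both at rate $\12 I(C;B)$ --- is not a finishing touch but the entire content of the theorem, and no mechanism for it is given. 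The claim that the two output rates coincide ``because this is the entropy gap between the relevant typical subspaces'' is unsupported.

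For comparison, the paper does not prove Theorem~\ref{theo:aux} by a single decoupling-plus-splitting step. It first proves Theorem~\ref{theo:fqrspiggyback}: run a fully-quantum-reverse-Shannon--type protocol (random unitaries acting on $C$ only, Lemma~\ref{lem:robustdecoupling}), but with $\kappa\approx 2^{nI(C;B)}$ independent random encodings; Bob identifies which encoding Alice chose by a pretty-good measurement on $C^nB^n$ (error controlled by Lemma~\ref{lem:hayashi}, giving the $\kappa\,\norm{\phi^{CB}}_0\norm{\phi^B}_\infty/|C|$ term), and this classical identification is made coherent by Lemma~\ref{lem:coherent}, so Bob's side information yields $I(C;B)$ coherent channels $[q\to qq]$ piggybacked on the transmission. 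Theorem~\ref{theo:aux} then follows by converting $I(C;B)[q\to qq]$ into $\12 I(C;B)[q\to q]+\12 I(C;B)[qq]$ via (\ref{eqn:cobitidentity}) and the composition lemma --- which is why the two output rates are equal. Your concluding appeal to Lemma~\ref{lem:cancellation} to pass from Theorem~\ref{theo:aux} to the corner point of Theorem~\ref{theo:main} matches the paper (modulo the sublinear leftover resources the paper discusses), but the proof of Theorem~\ref{theo:aux} itself is missing.
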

Together with the cancellation lemma (Lemma~\ref{lem:cancellation}), Theorem~\ref{theo:aux} yields a proof of Theorem~\ref{theo:main}.  However, observe that if $I(B;C) \geq I(A;C)$, the cancellation lemma still requires a sublinear amount of entanglement on the left.  Similarly, if we have $I(C;BD) = I(C;D)$ (i.e.\ if strong subaddivity is saturated), a sublinear amount of communication will also be required.  However, because $[q\to q] \geq [qq]$, the additional entanglement cost can be absorbed into the communication rate and is therefore only relevant if the state $\psi^{CBD}$ saturates strong subaddivity.  We discuss this point further in Section~\ref{section:discussion}. 

We prove Theorem~\ref{theo:aux} by means of another protocol that simulates  \emph{coherent channels} \cite{H04}.  A coherent channel  $[q\to qq]$ is a type of quantum feedback channel that is an isometry from Alice to Alice and Bob:
\[\ket{0}^A\ket{0}^B\bra{0}^A + \ket{1}^A\ket{1}^B\bra{1}^A.\]
Using a coherent version of teleportation, where Alice and Bob apply only local unitaries, it is known that \cite{H04}
\[[qq] + 2[q\to qq] \geq 2[qq] + [q\to q].\]
Repeated concatenation yields the following asymptotic resource inequality \cite{H04}: 
\begin{IEEEeqnarray}{rCl}
2 [q\to qq] \succeq [q\to q] + [qq].
\label{eqn:cobitidentity}
\end{IEEEeqnarray}
In fact, the opposite direction holds as a finite resource inequality, but it will not be useful for us here.  
In this paper, we devote most of our efforts toward proving the following theorem which, when combined with (\ref{eqn:cobitidentity}) and the composition lemma (Lemma~\ref{lem:composition}), provides a proof of Theorem~\ref{theo:aux}.
\begin{theo}
\begin{eqnarray*}
\psi^{AC|B}+ \12 I(C;BD)[q\to q] + \12 I(A;C)[qq] \hspace{.9in}\\
\hspace{1in}\succeq  \psi^{A|CB} + I(B;C)[q\to qq].
\end{eqnarray*}
\label{theo:fqrspiggyback}
\end{theo}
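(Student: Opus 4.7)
The strategy is a direct decoupling-based coding argument, modeled on the FQSW protocol but adapted to use Alice's pre-shared entanglement at the encoder and to produce coherent channels as its output. Before describing the protocol, I note the following motivation from the resource calculus. On any pure state $\ket{\psi}^{ACBR}$, the identities $H(AR)=H(CB)$ and $H(RB)=H(AC)$ imply
\[
I(C;AR)+I(C;B) \;=\; I(C;A)+I(C;RB) \;=\; 2H(C),
\]
so Theorem~\ref{theo:fqrspiggyback} carries the same net qubit-channel count as a coherent version of FQSW obtained by treating $AR$ as Alice's passive reference. Indeed, the ordinary FQSW resource inequality $\psi^{AC|B} + \tfrac12 I(C;AR)[q\to q] \succeq \psi^{A|CB} + \tfrac12 I(C;B)[qq]$, combined with an extra $\tfrac12 I(C;B)[q\to q]$ on the input side and with $[q\to q]+[qq]\geq 2[q\to qq]$ applied on the output side, yields
\[
\psi^{AC|B} + \tfrac12 I(C;A)[q\to q] + \tfrac12 I(C;RB)[q\to q] \succeq \psi^{A|CB} + I(C;B)[q\to qq].
\]
This naive route falls short of Theorem~\ref{theo:fqrspiggyback}: since $[q\to q]\geq[qq]$ cannot be reversed on the input side of a resource inequality, one cannot replace the $\tfrac12 I(C;A)[q\to q]$ by $\tfrac12 I(C;A)[qq]$ for free, and a stronger coding theorem is required.

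\emph{Protocol.} Alice and Bob begin with $n$ copies of $\ket{\psi}^{ACBR}$ together with a maximally entangled state $\ket{\Phi}^{T_AT_B}$ of Schmidt rank $2^{\lceil nI(C;A)/2\rceil}$ representing the input ebits. Alice applies a Haar-random unitary $U$ to her combined system $C^n T_A$ and partitions the output as $C_1 \otimes C_2$, sending $C_1$ to Bob. The dimensions are chosen so that $C_1$ has just enough room to accommodate the $\tfrac n2 I(C;RB)$ qubit channels together with the Alice-to-Bob register of the $nI(C;B)$ coherent channels to be produced, while $C_2$ will become coherently entangled with a register on Bob's side after decoding. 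A decoupling lemma of the type used in \cite{ADHW06,D05b} then guarantees that, in expectation over the Haar measure, the reduced state on $A^n C_2 R^n B^n T_B$ is $\ep$-close in trace distance to a product state $\pi^{C_2} \otimes \sigma^{A^n R^n B^n T_B}$. Derandomization selects a specific $U$ attaining the average bound, after which Uhlmann's theorem provides a decoding isometry on Bob's side, acting on $B^n T_B C_1$, which simultaneously delivers the target state $\psi^{A|CB}$ and exhibits a subsystem of dimension $2^{nI(C;B)}$ coherently mirrored between Alice and Bob, realizing the $nI(C;B)$ coherent channels.

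\emph{Main obstacle.} The core difficulty lies in the decoupling computation, where the pure-state entropic identities above are consumed. One must verify that the three-way dimension split across pre-shared ebits ($\tfrac12 I(C;A)$), qubit channels ($\tfrac12 I(C;RB)$), and coherent channels ($I(C;B)$) satisfies the threshold of the decoupling lemma; checking this balance requires repeated use of $H(AR)=H(CB)$ and $H(RB)=H(AC)$, and breaks down without purity of $\ket{\psi}^{ACBR}$. The Uhlmann step and the identification of the coherent-channel structure from the decoding isometry are then routine in spirit but require careful register bookkeeping to pinpoint which subsystem on Alice's side plays the role of the copied side of each coherent channel.
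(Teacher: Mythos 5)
There is a genuine gap, and it sits exactly where you wave it off as ``routine in spirit'': a single Haar-random unitary on $C^nT_A$ followed by decoupling and one Uhlmann isometry cannot produce the resource $I(C;B)\,[q\to qq]$. What such a protocol yields is (at best) a maximally entangled state between a register Alice keeps and a register Bob decodes, i.e.\ ebits $[qq]$ -- the FQSW-type output -- whereas a coherent channel is a \emph{channel}: the simulation must accept an arbitrary input on a $2^{nI(C;B)}$-dimensional register $K_\inn$ held by Alice and leave the global state close to the GHZ-type state $\frac{1}{\sqrt\kappa}\sum_k\ket{k}^{R'}\ket{k}^{K_\inn}\ket{k}^{K_\out}$ with the simulated-channel reference $R'$. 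Your protocol has no input port at all: nothing in it depends on a message or on an external quantum register, so there is no ``coherent-channel structure'' to be read off the decoding isometry. The distinction matters quantitatively, since $2[q\to qq]\succeq[q\to q]+[qq]$ is strictly stronger than $I(C;B)$ ebits; it is precisely this strength that the cancellation lemma consumes to drive the net qubit rate from $\12 I(C;RB)$ down to $\12 I(C;R|B)$ in Theorems~\ref{theo:aux} and \ref{theo:main}. The paper's proof supplies the missing mechanism: it chooses $\kappa\approx 2^{nI(C;B)}$ random unitaries $U_k$, obtains from the (robust) decoupling lemma and Uhlmann's theorem one encoding isometry $\CV_k$ per message (FQRS-style, with $B$ treated as part of the reference, which is what lets Alice's side information enter as $\12 I(C;A)$ \emph{ebits} rather than qubit channels), shows via the pretty-good measurement and the Hayashi--Nagaoka operator inequality (Lemma~\ref{lem:hayashi}) that Bob's side information $B^n$ suffices to identify which encoding was used, and then applies the coherification lemma (Lemma~\ref{lem:coherent}) together with Alice's controlled isometry $\sum_k\proj{k}^{K_\inn}\ox\CV_k$ to turn that classical distinguishability into a bona fide simulation of $[q\to qq]$. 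None of these steps appears in your sketch, and the decoupling bound you invoke does not control the distinguishability quantity $\kappa\norm{\phi^{CB}}_0\norm{\phi^B}_\infty/|C|$ that governs the coherent-channel rate.

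Two further problems in the protocol as written: (i) your dimension bookkeeping has Alice sending $C_1$ large enough for the $\frac n2 I(C;RB)$ qubit channels ``together with'' the Bob halves of the $nI(C;B)$ coherent channels, which exceeds the allowed communication budget of $\frac n2 I(C;RB)$ qubits -- in the correct protocol nothing extra is transmitted, the coherent information being piggybacked on the same $S$ system through the choice of encoding; and (ii) your decoupling criterion, $\pi^{C_2}$ product with the state on $A^nR^nB^nT_B$, is not the condition Uhlmann requires (only the marginal on the complement of Bob's systems, i.e.\ on $A^nC_2R^n$ or, in the paper's FQRS-dual formulation, on $\h{B}B^nR^n$, is relevant), and demanding decoupling from $B^n$ as well is both unnecessary and not achievable at the stated rates.
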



\subsection{Proof of Theorem~\ref{theo:fqrspiggyback}}
Our proof of Theorem~\ref{theo:fqrspiggyback} relies on the following one-shot version.  We call this a ``robust" one-shot protocol because the error bound is robust to small perturbations in the underlying state (c.f.\ \cite{HW03}).  We delay the proof of this theorem until Section~\ref{sec:fqrsproof}.  
\begin{theo}[Robust one-shot redistribution protocol]
Let a pure state $\ket{\psi}^{ABCD}$ and a maximally entangled state $\ket{\Phi}^{\h{A}\h{B}}$ be given, where $|\h{A}| = |\h{B}|$ divides $|C|$. Suppose that $\ket{\ph}^{ABCD}$ and $\ket{\phi}^{ABCD}$ are states satisfying 
\[\max\left\{\Norm{\psi^{ABCD} - \ph^{ABCD}}_1,\Norm{\psi^{ABCD} - \phi^{ABCD}}_1\right\} \leq \ep\]
for some $\ep \leq ( 6 - 4\sqrt{2})^2 \approx .1177$. 
Then there exist a quantum system $S$ with $|S| = |C|/|\h{B}|$,  $\kappa$ encoding isometries $\CV_k^{\h{A}AC\to AS}$ and a decoding isometry $\CW^{SB\h{B}\to BCK}$ under which 
\begin{IEEEeqnarray}{rCl}
\frac{1}{\kappa}\sum_{k=1}^\kappa \bra{k}^{K}\bra{\psi}^{ABCD}\CW\CV_k\ket{\psi}^{ABCD}\ket{\Phi}^{\h{A}\h{B}}&\geq& 1-\eta
\label{eqn:1shoterror}
\end{IEEEeqnarray}
where $\eta$ is equal to 
\begin{IEEEeqnarray}{rCl}
6\sqrt{\ep} + 4\left(\!\frac{|C| \norm{\ph^{BD}}_0
\norm{\ph^{BCD}}_2^2}
{|S|^2}\right)^{1/4}
\!\!\!\!\!\!\!+ \frac{4 \kappa \norm{\phi^{BC}}_0 \norm{\phi^B}_\infty}{ |C|}. \hspace{.2in}
\label{eqn:eta}
\end{IEEEeqnarray}
\label{theo:1shotfqrspiggyback}
\end{theo}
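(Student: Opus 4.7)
The plan is to construct the encoding isometries $\CV_k$ and decoder $\CW$ by a randomized argument that combines one-shot decoupling (to handle the quantum part of state redistribution) with piggybacking via random coding (to coherently transmit the index $k$). Concretely, I would parametrize each encoder as $\CV_k$ obtained by first applying a Haar-random unitary $U_k$ on the input space $\h{A}AC$ (with the $U_k$'s drawn independently across $k$), then embedding isometrically into $AS$ via a fixed partial isometry that carves off a register of the required size $|S|=|C|/|\h{B}|$. Think of this as Alice decoupling her outgoing register $S$ from the reference while using the dummy index $k$ to mark which codebook was used.

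The first main step would be to invoke a one-shot decoupling theorem on the approximate state $\ph$: after applying a Haar-random $U$ on $\h{A}AC$ and tracing out $S$, the remaining state on $AR$ (and on Bob's $B\h{B}$ coupled to $R$) lies within trace distance of a product state, with expected error bounded by $\bigl(|C|\,\norm{\ph^{BR}}_0\,\norm{\ph^{CBR}}_2^2 /|S|^2\bigr)^{1/2}$ — the usual $L_2$-to-$L_1$ conversion via smoothing of $\ph^{BR}$ against its flattened version, contributing the $(\cdot)^{1/4}$ summand in $\eta$ after a Markov-style selection of a good unitary. Given such decoupling, Uhlmann's theorem produces, for each fixed $U_k$, a decoding isometry on $SB\h{B}$ that reconstructs a purification extremely close to $\ket{\psi}^{ACBR}\ket{\Phi}^{\h{A}\h{B}}$ in the correct partition $A|CB\h{A}\h{B}$; this is what gets Bob both the $C$ subsystem and the coherent reproduction of $\ket{\Phi}^{\h{A}\h{B}}$ inside his new $K$ register. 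Substituting $\psi \to \ph$ and back via the triangle inequality together with \eqref{eqn:tr2fid}--\eqref{eqn:fid2tr} accounts for the $6\sqrt{\ep}$ contribution to $\eta$.

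The second main step is to make the decoder work \emph{simultaneously} for all $\kappa$ codebooks, so that a single $\CW$ also recovers the classical index $k$ as $\ket{k}^K$. For this I would run a Hayashi--Nagaoka / quantum-union argument on the auxiliary state $\phi$: the typical projector onto the support of $\phi^{CB}$ has rank controlled by $\norm{\phi^{CB}}_0$, and the noise states seen by Bob for $k'\neq k$ are flattened by $\norm{\phi^B}_\infty$, so a pretty-good or square-root measurement distinguishes the $\kappa$ alternatives with average error at most $\kappa\,\norm{\phi^{CB}}_0\,\norm{\phi^B}_\infty/|C|$, which is the last summand in \eqref{eqn:eta}. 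Coherifying the resulting measurement using the standard ``square-root-measurement becomes an isometry'' trick merges it into $\CW$ without changing the quantum-fidelity estimate.

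The step I expect to be the main obstacle is ensuring that a \emph{single} random ensemble $\{U_k\}$ simultaneously satisfies the decoupling bound (which concerns the state $\ph$) and the code-distinguishability bound (which concerns the state $\phi$), while keeping the derandomization cheap enough that the averaged fidelity in \eqref{eqn:1shoterror} is bounded by $1-2\eta$ rather than by something weaker like $1-O(\sqrt{\kappa}\eta)$. The key will be to write the expected infidelity as a sum of two expectations — one over the decoupling functional, one over the classical error — each of which is individually small when $|S|$ is chosen in accordance with the stated scaling; Markov's inequality then yields a single deterministic choice of $\{\CV_k\}$ and $\CW$ meeting both requirements, completing the proof.
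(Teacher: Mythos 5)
There is a genuine structural flaw in your first main step. You build each encoder $\CV_k$ as ``Haar-random unitary $U_k$ on $\h{A}AC$, then a fixed partial isometry onto $AS$,'' and you give the Uhlmann isometry to Bob as the decoder on $SB\h{B}$. That is the fully-quantum-Slepian-Wolf orientation (random \emph{encoding} determines the decoding), and it cannot work here: a Haar-random unitary across all of $\h{A}AC$ scrambles Alice's retained side information, so the state she keeps is close to maximally mixed and decoupled from $R$ and $B$, whereas the target $\ket{\psi}^{ACBR}$ requires the $A$--$RB$ correlations to survive intact. The mismatch lives on systems ($A$ and $R$) that Bob never touches, so no Uhlmann isometry on $SB\h{B}$ can repair it. Consistently with this, the error term you quote, involving $\norm{\ph^{BR}}_0$ and $\norm{\ph^{CBR}}_2^2$, is exactly the bound one gets when the random unitary acts on $C$ \emph{alone} with environment $BR$; it does not arise from twirling $\h{A}AC$. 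The correct construction is the reverse-Shannon orientation: draw $U_1,\dotsc,U_\kappa$ Haar on $\CU(C)$, decompose $C\to S\h{B}$ by a fixed $W$, use the decoupling lemma to show $\h{B}$ is nearly decoupled from $BR$ in $WU_k\ket{\psi}$, and then let the \emph{encoder} be the adjoint of the Uhlmann isometry $V_k^{AS\to\h{A}AC}$ relating the two purifications (random decoding determines the encoding). Bob's decoder is $W^\dagger$ followed by a coherified pretty-good measurement built from $\Pi_k=U_k\Pi U_k^\dagger$ (with $\Pi$ the support projector of $\phi^{CB}$), which simultaneously identifies $k$ and applies $U_k^\dagger$; the computation $\E\,\Tr\Pi_1\phi_2^{CB}\leq \norm{\phi^{CB}}_0\norm{\phi^B}_\infty/|C|$ also relies on the $U_k$ acting on $C$.

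A secondary confusion: you claim the decoder ``reconstructs $\ket{\psi}^{ACBR}\ket{\Phi}^{\h{A}\h{B}}$'' and that $K$ holds a ``coherent reproduction'' of the entanglement. In this theorem the ebits $\ket{\Phi}^{\h{A}\h{B}}$ are an \emph{input} that is consumed; the output is $\ket{k}^K\ket{\psi}^{ACBR}$ with $C$ at Bob, and $K$ carries the (coherently recoverable) index $k$, which is what later yields the coherent-channel rate. Your second main step (Hayashi--Nagaoka plus coherification, and derandomizing by bounding a single expectation of the sum of the decoupling and distinguishability errors) matches the paper in spirit, but it only goes through once the encoder/decoder roles are set up as above, with the same random unitaries on $C$ serving both the decoupling and the piggybacked code.
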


\vspace{.1in}
Now we show how to apply Theorem~\ref{theo:1shotfqrspiggyback} to pure states of the form $\big(\ket{\psi}^{ABCD}\big)^{\ox n}$ to obtain a proof of Theorem~\ref{theo:fqrspiggyback}.  This is accomplished via the following theorem.  The direct part is proved in \cite{ADHW06}, while the converse part follows from standard arguments in classical information theory (see e.g.\ \cite{CT91a}).
\begin{theo}[Method of types]
Let a tripartite state $\ket{\psi}^{ABC}$ be given.
For every $\ep, \delta > 0$ and all sufficiently large $n$, there are projections 
$\Pi_\d^{A^n}$, $\Pi_\d^{B^n}$, and  $\Pi_\d^{C^n}$
such that for $T\in \{A,B,C\}$, 
\begin{IEEEeqnarray*}{rCl}
\Tr\Pi_\d^{T^n}(\psi^T)^{\ox n} &\geq& 1-\ep
\end{IEEEeqnarray*}
\begin{IEEEeqnarray*}{rCcCl}
2^{nH(T) - n\d} &\leq& \Tr\Pi_\d^{T^n} &\leq& 2^{nH(T) + n\d}.
\end{IEEEeqnarray*}
Also, the
normalized version $\ket{\ph}^{A^nB^nC^n}$ of the subnormalized state 
\[\big(\Pi_\d^{A^n}\!\ox\Pi_\d^{B^n}\!\ox\Pi_\d^{C^n}\big)\big(\ket{\psi}^{ABCD}\big)^{\ox n}\]
satisfies 
\begin{IEEEeqnarray*}{rCL}
\Norm{\ph^{A^n\!B^n\!C^n} - \big(\psi^{ABC}\big)^{\ox n}}_1 &\leq& \ep 
\end{IEEEeqnarray*}
and for each $T\in \{A,B,C,AB,BC,AC\}$, 
\begin{IEEEeqnarray*}{rClCl}
 2^{nH(T) - n\d} &\leq& \,\,\,\Norm{\ph^{T^n}}_0 &\leq& 2^{nH(T) + n\d} \\
 2^{-nH(T) - n\d} &\leq& \,\,\,\Norm{\ph^{T^n}}^2_2 &\leq& 2^{-nH(T) + n\d} \\
 2^{-nH(T) - n\d} &\leq& \,\,\,\Norm{\ph^{T^n}}_\infty &\leq& 2^{-nH(T) + n\d}. 
\end{IEEEeqnarray*}
The entropies in these bounds are evaluated on $\ket{\psi}^{ABC}$.  
Additionally, the normalized version $\ket{\Psi}^{A^nB^nC^n}$ of the subnormalized state
\[\big(\openone^{A^n}\ox \openone^{B^n}\ox \Pi_\delta^{C^n}\big) \big(\ket{\psi}^{ABC}\big)^{\ox n}\]
satisfies 
\begin{IEEEeqnarray*}{rCl}
\Norm{\Psi^{A^nB^nC^n} - \big(\psi^{ABC}\big)^{\ox n}}_1 &\leq& \ep.
\end{IEEEeqnarray*}
Finally, there is a $\delta$-independent constant $c>0$ such that we may take $\ep = 2^{-nc\d^2}$ in all of the above bounds. 
\label{theo:types}
\end{theo}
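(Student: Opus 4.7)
The plan is to reduce the quantum statement to the classical method of types applied to the spectra of the single-system marginals, and then assemble the remaining claims from purity considerations on $\ket{\phi}^{A^n B^n C^n}$. First, I diagonalize each $\psi^T = \sum_x p_T(x)\ket{x}\bra{x}^T$ for $T \in \{A,B,C\}$ and define $\Pi_\delta^{T^n}$ as the projector onto the span of basis strings $\ket{x_1 x_2 \cdots x_n}^{T^n}$ whose empirical type lies within variational distance $\delta$ of $p_T$. The probability bound and the rank bounds $2^{nH(T) \pm n\delta}$ then come from Chernoff--Hoeffding applied to the i.i.d.\ eigenvalue sampling together with standard type counting \cite{CT91a}, with exponential convergence $\epsilon = 2^{-nc\delta^2}$.

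Next, I produce $\ket{\phi}^{A^n B^n C^n}$ by applying the three projectors in tensor. A union bound lower-bounds the post-selection probability by $1 - 3\epsilon$, and the Gentle Measurement Lemma then converts this into the trace-distance bound $\norm{\phi^{A^n B^n C^n} - (\psi^{ABC})^{\otimes n}}_1 \leq 2\sqrt{3\epsilon}$. The same argument using only $\Pi_\delta^{C^n}$ yields the analogous bound for $\ket{\Psi}^{A^n B^n C^n}$.

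For the Schatten-norm bounds on marginals $\phi^{T^n}$, I treat singletons and pairs separately. For singleton $T$, the support containment $\phi^{T^n} \in \mathrm{range}(\Pi_\delta^{T^n})$ gives the rank upper bound immediately. The operator-norm upper bound follows by writing $\norm{\phi^{T^n}}_\infty = \max_{\ket{\alpha}} p^{-1}\norm{\bra{\alpha}\Pi\ket{\psi}^{\otimes n}}^2$ and observing that the tensored projectors on the complementary systems are contractions, which reduces the expression to $p^{-1}\norm{\Pi_\delta^{T^n}(\psi^T)^{\otimes n}\Pi_\delta^{T^n}}_\infty$; the latter is controlled by the defining eigenvalue range of the typical subspace. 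The matching lower bounds and the $\ell_2$ bounds then follow from $\Tr\phi^{T^n} = 1$ combined with the elementary inequalities $1/\norm{\phi^{T^n}}_0 \leq \norm{\phi^{T^n}}_2^2 \leq \norm{\phi^{T^n}}_\infty$ for normalized positive operators. For pair $T \in \{AB, BC, AC\}$, I invoke the purity of $\ket{\phi}^{A^n B^n C^n}$: complementary marginals share a nonzero spectrum, and purity of the original $\ket{\psi}^{ABC}$ forces $H(T) = H(T^c)$, so every singleton bound transfers verbatim to its complement.

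The main obstacle is parameter bookkeeping rather than any deep ingredient: each step above introduces a polynomial-in-$n$ slack factor (the Chernoff prefactor, the $\sqrt{\cdot}$ from gentle measurement, the $(1-3\epsilon)^{-1}$ from renormalization), and all of these must be absorbed into a single clean statement with $\epsilon = 2^{-nc\delta^2}$. Because every slack factor is subexponential in $n$ at fixed $\delta$, absorbing them into a slightly smaller $c$ and an infinitesimally enlarged effective $\delta$ leaves the stated exponential form intact.
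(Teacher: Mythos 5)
Your proposal is correct: the paper does not prove Theorem~\ref{theo:types} in-text but defers the direct part to \cite{ADHW06}, and your argument is essentially that same standard route --- classical strong typicality applied to the eigenvalue distributions of the single-system marginals (Chernoff--Hoeffding plus type counting), gentle measurement for the trace-distance claims, and the reduction of all Schatten-norm bounds to the rank and $\infty$-norm upper bounds via $1/\norm{\rho}_0 \leq \norm{\rho}_2^2 \leq \norm{\rho}_\infty$. In particular, your use of purity of $\ket{\ph}^{A^nB^nC^n}$ and $H(T)=H(T^c)$ to transfer the singleton bounds to the two-system marginals is exactly the right (and necessary) step, since a naive support-containment bound on pairs would give the wrong exponent, and the remaining slack factors are indeed absorbed into the constant $c$ as you say.
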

\vspace{.1in}

\emph{Proof of Theorem~\ref{theo:fqrspiggyback}}:  
We will apply Theorem~\ref{theo:types} two separate times to the state $\big(\ket{\psi}^{ABCD}\big)^{\ox n}$, obtaining two auxiliary states that control the main quantities in the error bound  (\ref{eqn:eta}) of Theorem~\ref{theo:1shotfqrspiggyback}.  For the first, we consider $\ket{\psi}^{ABCD}$ to be a tripartite state of the systems $A,C,BD$.  We thus obtain, for every $\d > 0$ and all sufficiently large $n$, a state $\ket{\ph}^{A^n\!B^n\!C^n\!D^n}$ that is $\ep$-close to $\ket{\psi}^{\ox n}$ in trace distance for $\ep = 2^{-nc\d^2}$, such that the matrix norms in the second term of (\ref{eqn:eta}) have the appropriate exponential bounds.  With respect to the partition $AD,B,C$,  we similarly obtain another state $\ket{\phi}^{A^n\!B^n\!C^n\!D^n}$ such that the operator norms in the last term of (\ref{eqn:eta}) are bounded accordingly.  
Alice initiates the protocol by Schumacher compressing the system $C^n$.  For this, she performs the projective measurement $\{\Pi_\delta^{C^n},\openone^{C^n} \!\!\!\!- \Pi_\delta^{C^n}\}$ on $C^n$.  According to Theorem~\ref{theo:types}, the first outcome occurs with probability at least $1-\ep$.
In this case, the global state is replaced by the normalized version $\ket{\Psi}^{A^n\!B^n\!C^n\!D^n}$ of the projected state $\Pi^{C^n}_\d\big(\ket{\psi}^{ABCD}\big)^{\ox n}$.  
In case the other outcome occurs, Alice declares an error and the protocol is aborted.
We condition on the first case.  In what follows, we identify $\ket{\Psi}^{A^n\!B^n\!C^n\!D^n}$ with its restriction $\ket{\Psi}^{A^n\!B^n\!C_\d D^n}$ to the support $C_\d$ of the typical projection $\Pi_\d^{C^n}$. 

By the triangle inequality, each of $\ket{\ph}^{A^n\!B^n\!C^n\!D^n}$ and  $\ket{\phi}^{A^n\!B^n\!C^n\!D^n}$ is $2\ep$-close to $\ket{\Psi}^{A^n\!B^n\!C_\d D^n}$ in trace distance because all three states are $\ep$-close to $\big(\ket{\psi}^{ABCD}\big)^{\ox n}$.
Therefore, 
the one-shot theorem (Theorem~\ref{theo:1shotfqrspiggyback}) implies that there exist a quantum system $S$ and a maximally entangled state $\ket{\Phi}^{\h{A}\h{B}}$ with $|\h{A}|\cdot |S| = |C_\d|$, together with $\kappa$ encoding isometries $\CV_k^{\h{A}A^n\!C_\d\to A^n\!S}$ and a decoding isometry $\CW^{SB^n\h{B}\to B^n\!C_\d K_\out}$ satisfying (\ref{eqn:1shoterror}) and (\ref{eqn:eta}) with $\ep$ replaced by $2\ep$.  If Alice applies one of the isometries $\CV_k$ uniformly at random and sends $S$ to Bob, after which he applies $\CW$, the system $C_\d$ will be transferred with high global fidelity.  By measuring $K_\out$, Bob can, on the average, identify Alice's encoding.  Rather than send Bob classical information, Alice can instead simulate a coherent channel from a system $K_\inn$ to $K_\inn K_\out$ by applying a controlled isometry 
\[\CV = \sum_k \proj{k}^{K_\inn} \ox \CV^{\h{A}A^nC_\d\to A^nS}.\]
If she tries to send half of a maximally entangled state $\ket{\Phi}^{K'K_\inn}$,
the global pure state $\ket{\Om}$ on $A^n B^nC_\d D^nK'K_\inn K_\out$ that results from the protocol is 
\[\ket{\Om} = \CW\circ\CV\ket{\Psi}^{A^n\!B^n\!C_\d D^n}\ket{\Phi}^{K'K_\inn}\ket{\Phi}^{\h{A}\h{B}}.\]
It is then immediate from (\ref{eqn:1shoterror}) that 
\[\bra{\Phi}^{K'K_\inn K_\out}\bra{\Psi}^{A^n\!B^n\!C_\d D^n}\ket{\Om} \geq 1-\eta.\]
where 
\[\ket{\Phi}^{K'K_\inn K_\out} = \frac{1}{\sqrt{\kappa}}\sum_{k=1}^\kappa 
\ket{k}^{K'}\ket{k}^{K_\inn}\ket{k}^{K_\out}\]
is a GHZ state.  
The corresponding fidelity is thus bounded by $(1-\eta)^2 \geq 1-2\eta$.
By monotonicity of fidelity, we obtain 
\begin{IEEEeqnarray}{rCl}
F(\ket{\Phi}^{K'K_\inn K_\out},\Om^{K'K_\inn K_\out}) \geq 1-2\eta
\label{eqn:est1}
\end{IEEEeqnarray}
and with (\ref{eqn:fid2tr}), we similarly find that 
\begin{IEEEeqnarray}{rCl}
\Norm{\Om^{A^n\!B^n\!C_\d D^n} - \Psi^{A^n\!B^n\!C_\d D^n}}_1 \leq 2\sqrt{2\eta}. \nn
\end{IEEEeqnarray}
Because $\Psi^{A^n\!B^n\!C_\d D^n}$ is $\ep$-close to $\big(\psi^{ABCD}\big)^{\ox n}$ 
in trace distance, the triangle inequality implies that 
\begin{IEEEeqnarray}{rCl}
\Norm{\Om^{A^n\!B^n\!C_\d D^n} - \big(\psi^{ABCD}\big)^{\ox n}}_1 \leq 2\sqrt{2\eta} + \ep.
\label{eqn:est3}
\end{IEEEeqnarray}
We may combine the estimates (\ref{eqn:est1}) and (\ref{eqn:est3}) using Lemma~2 from \cite{YDH05},  yielding
\begin{IEEEeqnarray}{rCl}
\IEEEeqnarraymulticol{3}{l}{
F\left(\ket{\Phi}^{K'K_\inn K_\out}\big(\ket{\psi}^{ABCD}\big)^{\ox n},\ket{\Om}\right) \hspace{1in}} \nn \\
&\geq& 
1 - \Norm{\Om^{A^n\!B^n\!C_\d D^n} - \big({\psi}^{ABCD}\big)^{\ox n}}_1 \nn\\
& & \hspace{.105in} - 3\big(1-F(\ket{\Phi}^{K'K_\inn K_\out},\Om^{K'K_\inn K_\out})\big)\nn \\
&\geq& 1-\ep  - 2\sqrt{2\eta}- 6\eta.  \label{eqn:finalfidelity}
\end{IEEEeqnarray}
Now we only need to bound the two main terms in the expression (\ref{eqn:eta}) for $\eta$. 
Taking
$|S| = 2^{nQ}$ and $\kappa = 2^{nR}$, 
the first main quantity in (\ref{eqn:eta}) satisfies
\begin{IEEEeqnarray}{rCl}
\IEEEeqnarraymulticol{3}{l}{
\frac{|C_\d| \Norm{\ph^{B^n\!D^n}}_0\Norm{\ph^{B^n\!C^n\!D^n}}_2^2}{|S|^2} \hspace{1.6in}} \nn\\ \hspace{1in}
&\leq& 2^{n[H(C) + H(BD) - H(BCD) - 2Q] + 3n\d} \nn\\
&=& 2^{n[I(C;BD)-2Q] + 3n\d} \label{eqn:iidbound1}
\end{IEEEeqnarray}
and thus tends to zero exponentially fast provided that 
\begin{eqnarray}
Q\geq \frac 12 I(C;BD) + 2\d. \label{eqn:Qrate}
\end{eqnarray}  For the second term, 
\begin{IEEEeqnarray*}{rCl}
 \frac{\kappa \Norm{\phi^{B^n\!C^n}}_0 \Norm{\phi^{B^n}}_\infty}{|C_\d|} &\leq&
 2^{n[R+H(BC)-H(C) -H(B)] + 3n\d} \\
&=& 2^{n[R-I(B;C)] + 3n\d} 
\end{IEEEeqnarray*}
so that if $R \leq I(B;C) - 4\d$, this term also goes to zero exponentially with $n$.   For sufficiently large $n$, each of these terms is less than $\ep = 2^{-nc\d^2}$, giving 
\[\eta \leq 6\sqrt{2\ep} + 4\ep^{1/4} + 4\ep.
\]
Therefore, the overall fidelity (\ref{eqn:finalfidelity}) is at least $1-6\ep^{1/8}$ when $\ep$ is sufficiently small. 
Recall the identity 
\[H(C) = \12 I(C;BD) + \12 I(A;C).\] 
If $Q$ obeys (\ref{eqn:Qrate}), Theorem~\ref{theo:types} implies that $|C_\delta| \leq 2^{n[H(C) + \delta]}$.  Therefore the protocol uses entanglement at rate
\begin{IEEEeqnarray*}{rCl}
E_\inn &=& \frac{1}{n}\log{|C_\d|} - Q
\leq \12 I(A;C) - \delta. 
\end{IEEEeqnarray*} 
Because $\delta>0$ can be taken arbitrarily small, it follows that whenever 
\[Q > \12 I(C;BD), \,E_\inn > \12 I(A;C),\, \text{ and } R < I(B;C),\] 
we have, for all sufficiently large $n$, 
\begin{eqnarray*}
\psi^{AC|B} + \lfloor nQ\rfloor [q\to q] + \lfloor n E_\inn\rfloor [qq]  \hspace{1.4in} \\
\hspace{1.3in} \geq_{6\ep^{1/8}}  \psi^{A | CB} +
\lfloor nR \rfloor [q\to qq].
\end{eqnarray*}
Since this holds for arbitrarily small $\ep > 0$ (in fact, it even holds for $\ep \to 0$ exponentially fast with $n$), the asymptotic resource inequality of Theorem~\ref{theo:fqrspiggyback} follows.  
\QED
\subsection{Proof of Theorem~\ref{theo:1shotfqrspiggyback} \label{sec:fqrsproof}}  
Our proof of Theorem~\ref{theo:1shotfqrspiggyback} makes essential use of the following robust one-shot decoupling lemma, which is proved in the appendix.  After stating the lemma, we briefly recall how it is used in two previously studied special cases of our redistribution result, to help the reader understand the context into which it fits with our proof.  
\begin{lem}[Robust one-shot decoupling]
Let a density matrix $\psi^{CE}$ be given, fix $\ep > 0$ and let  $\ph^{CE}$ be any state satisfying $\Norm{\psi^{CE} - \ph^{CE}}_1\leq \ep$.  
Fix a unitary decomposition $W^{C\to S\h{B}}$ of $C$ into subsystems and define,  for each $U^{C\to C}$,  
\begin{eqnarray*}
\psi_U^{S\h{B}E} &=&  WU\psi^{C E}U^\dagger W^\dagger.
\end{eqnarray*}
Then the average state 
\[\overline{\psi}^{S\h{B}E} = \int_{\CU(C)} \psi_U^{S\h{B}E} dU.\]
satisfies
\begin{IEEEeqnarray}{rCl}
\Norm{\overline{\psi}^{\h{B}E} -  \pi^{\h{B}}\ox \psi^E}_1 
&\leq & 
2 \ep + \sqrt{\frac{|C| \Norm{\ph^{E}}_0
\Norm{\ph^{CE}}_2^2}
{|S|^2}}. \hspace{.3in}
\end{IEEEeqnarray}
\label{lem:robustdecoupling}
\end{lem}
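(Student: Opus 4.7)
My plan is to execute the standard decoupling recipe---triangle inequality to swap $\psi$ for its approximant $\phi$, Cauchy--Schwarz plus Jensen's inequality to pass from trace norm to Hilbert--Schmidt norm, and two-copy unitary twirling to compute the resulting Haar expectation---now adapted for the robust variant in which the Schatten-norm bounds appear on the nearby state $\phi$ rather than on $\psi$ itself. I interpret the lemma's quantity as the Haar-averaged trace distance, moving freely between $\E_U\norm{\,\cdot\,}_1$ and $\norm{\E_U(\,\cdot\,)}_1$ by convexity when needed.

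First I would insert $\phi_U^{\h{B}E}$ and $\pi^{\h{B}}\ox\phi^E$ as intermediate points via the triangle inequality.  The two outer terms are each bounded by $\ep$: the term $\norm{\psi_U^{\h{B}E}-\phi_U^{\h{B}E}}_1\leq \norm{\psi^{CE}-\phi^{CE}}_1\leq\ep$ by monotonicity of trace distance under the channel $\sigma\mapsto \Tr_S(WU\sigma U^\dagger W^\dagger)$, and $\norm{\pi^{\h{B}}\ox(\phi^E-\psi^E)}_1\leq\ep$ by the same monotonicity applied to the preparation $\sigma^E\mapsto \pi^{\h{B}}\ox\sigma^E$. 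This leaves the central term involving only $\phi$. Because $\phi_U^{\h{B}E}-\pi^{\h{B}}\ox\phi^E$ is supported on $\h{B}\ox\supp(\phi^E)$ and hence has rank at most $|\h{B}|\cdot\norm{\phi^E}_0$, the standard bound $\norm{X}_1\leq\sqrt{\text{rank}(X)}\,\norm{X}_2$ together with Jensen's inequality on the concave square root gives
\[\E_U\norm{\phi_U^{\h{B}E}-\pi^{\h{B}}\ox\phi^E}_1\leq \sqrt{|\h{B}|\,\norm{\phi^E}_0\cdot \E_U\norm{\phi_U^{\h{B}E}-\pi^{\h{B}}\ox\phi^E}_2^2}.\]
The identity $\E_U\phi_U^{\h{B}E}=\pi^{\h{B}}\ox\phi^E$, immediate from Haar invariance together with the fact that $W$ is unitary, collapses the variance inside the square root to $\E_U\Tr[(\phi_U^{\h{B}E})^2]-\norm{\phi^E}_2^2/|\h{B}|$.

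The main computational step is therefore the Haar average of the purity $\Tr[(\phi_U^{\h{B}E})^2]$. Using the swap trick $\Tr(X^2)=\Tr[F(X\ox X)]$, pushing the two copies of the isometry $W$ across the partial trace over $S$, and invoking the two-copy twirling formula $\int U^{\ox 2}MU^{\dagger\ox 2}dU = a\openone+bF^{CC'}$ with coefficients determined by $\Tr M$ and $\Tr(F^{CC'}M)$, the computation reduces to evaluating the combinatorial traces $\Tr(F_{\h{B}\h{B}'}\ox\openone_{SS'})=|\h{B}|\,|S|^2=|C||S|$ and $\Tr[(F_{\h{B}\h{B}'}\ox\openone_{SS'})F^{CC'}]=|\h{B}|^2|S|=|C||\h{B}|$ against the $EE'$-factors that produce $\norm{\phi^E}_2^2$ and $\norm{\phi^{CE}}_2^2$ respectively. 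After subtracting the $\norm{\phi^E}_2^2/|\h{B}|$ correction and discarding the negative lower-order contributions of order $1/|C|^2$, the dominant piece is $\norm{\phi^{CE}}_2^2/|S|$. Substituting back into the Cauchy--Schwarz bound and using $|\h{B}|=|C|/|S|$ yields the advertised $\sqrt{|C|\,\norm{\phi^E}_0\,\norm{\phi^{CE}}_2^2/|S|^2}$. The main obstacle is the careful bookkeeping of the twirling step---tracking which factors of the swap arise from the decomposition $C=S\ox\h{B}$ versus from $EE'$---but it is a mechanical calculation once the swap trick is set up, not a conceptual difficulty.
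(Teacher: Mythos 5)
Your proposal is correct and follows essentially the same route as the paper: a triangle inequality with monotonicity of trace distance reduces the problem to the nearby state $\ph$, paying $2\ep$, after which the Haar-averaged decoupling bound on $\ph$ gives the remaining term. The only difference is one of packaging---the paper quotes the non-robust one-shot decoupling estimate as a separate lemma (Lemma~\ref{lem:decoupling}) and applies it after a convexity step, whereas you unfold that estimate inline via the rank/Cauchy--Schwarz bound, Jensen, and the two-copy twirling computation, which is precisely the content of that cited lemma.
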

\vspace{.1in}

The state redistribution problem generalizes two previously considered tasks which nonetheless play a role in our proof.  Since our nomenclature differs from past writings, we pause briefly to describe our conventions.  When $A$ is trivial or is otherwise regarded as part of the reference $D$, we follow \cite{HOW05} in calling the corresponding task \emph{state merging} because Alice is asked to ``merge" $C$ with $B$.  When $B$ is trivial we call the task \emph{state splitting} because Alice must ``split" $C$ apart from $C$.  
In \cite{D05b,ADHW06}, these tasks were respectively called ``fully quantum Slepian-Wolf" and ``fully quantum reverse Shannon", with the additional understanding that the involved resources are quantum communication and entanglement.  On the other hand,  \cite{HOW05,HOW05b} introduced a protocol for the state merging problem --  the so-called ``state merging protocol" that only allows the use of quantum entanglement and classical communication.  In this paper, when we speak of protocols for merging and splitting, we shall mean protocols with fully quantum resources in the sense of \cite{D05b,ADHW06}, reserving the the term ``merging with classical communication" for protocols in the sense of \cite{HOW05,HOW05b}.  In Section~\ref{section:discussion}, we show how our protocol generalizes these latter protocols when the communication is limited to be only classical.

Given $\ket{\psi}^{BCD}$, an optimal protocol for merging $C$ with $B$, was given by proving the inequality 
\[\psi^{C|B} + \12 I(C;D) [q\to q] \succeq  \psi^{\emptyset|CB} + \12 I(B;C) [qq].\]
Together with the method of types (Theorem~\ref{theo:types}), the above decoupling lemma provides an immediate proof of this resource inequality. Indeed, if Alice encodes with a random unitary, Lemma~\ref{lem:robustdecoupling} ensures that a system $\h{A}$ (identified with $\h{B}$ in the lemma), which will hold Alice's half of the generated entanglement, is approximately maximally mixed and decoupled from $R$.  This can easily be shown to imply that $\h{A}$ is maximally entangled with $BS$ (see \cite{ADHW06}, or compare with the proof of Theorem~\ref{theo:1shotfqrspiggyback} below).  Because all transformations are unitary and the global state is pure, this ensures that Bob can apply a local isometry to reconstruct $C$, while at the same time obtaining the other half of the generated entanglement.  This scenario is illustrated on the left of Figure~\ref{fig:FQSWcircuit}.

\begin{figure}
\centering
\includegraphics[scale=.5]{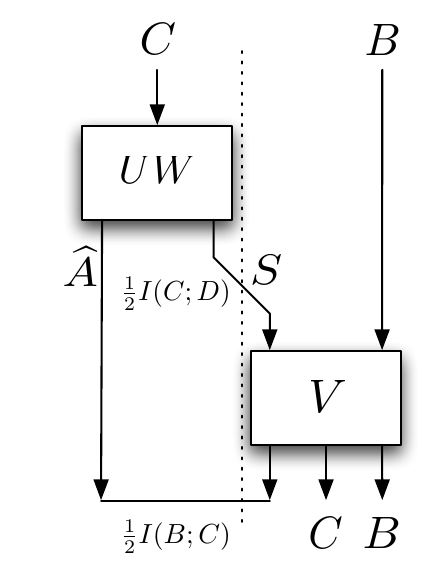} \hspace{.3in}
\includegraphics[scale=.5]{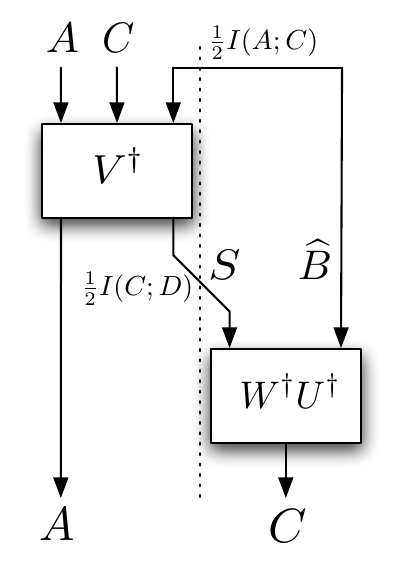}
\caption{Circuits for merging (left) and splitting (right), related by time-reversal and swapping $A\leftrightarrow B$.  We have included the rates one gets by applying the method of types (Theorem~\ref{theo:types}) to the one-shot decoupling lemma (Lemma~\ref{lem:robustdecoupling}).  Note that for merging, the random encoding determines the decoding, while for splitting, the random decoding determines the encoding.}
\label{fig:FQSWcircuit}
\end{figure}  

Given $\ket{\psi}^{ACD}$, a circuit for splitting is obtained by running a merging circuit in reverse (while swapping the labels $A \leftrightarrow B$), yielding the inequality  
\[\psi^{AC|\emptyset} + \12 I(C;D) [q\to q]  + \12 I(A;C) [qq] + \succeq \psi^{A|C}.\]
The corresponding circuit is pictured on the right of Figure~\ref{fig:FQSWcircuit}.  

We prove Theorem~\ref{theo:1shotfqrspiggyback} as follows.   If Bob's side information is considered as part of the reference (i.e.\ is disregarded as side information), the fully quantum reverse Shannon protocol can be used to transfer $C$ from Alice to Bob, at least making use of Alice's side information.  By a modification of that protocol provided below, Bob's side information can be utilized to simulate the required coherent channels $[q\to qq]$ as follows.   Rather than choosing a single random unitary for the decoding, we choose exponentially many (roughly $2^{nI(B;C)}$ when we apply the method of types to the one-shot result).  We further guarantee that if Alice chooses one of the corresponding encodings uniformly at random, Bob can, on average, correctly distinguish that encoding in order to apply the correct decoding.  Thus, it is possible for Alice to ``piggyback" classical information on the transmitted qubits, that Bob can access by means of his side information (cf.\ \cite{HHHLT01,BSST02}).  We further ensure that this can all be done coherently, where Alice instead applies a superposition of encodings by using a controlled isometry that is controlled by an arbitrary quantum state.   The circuit we construct for performing this task non-coherently is illustrated in Figure~\ref{fig:piggyback}.
\begin{figure}
\centering
\includegraphics[scale=.5]{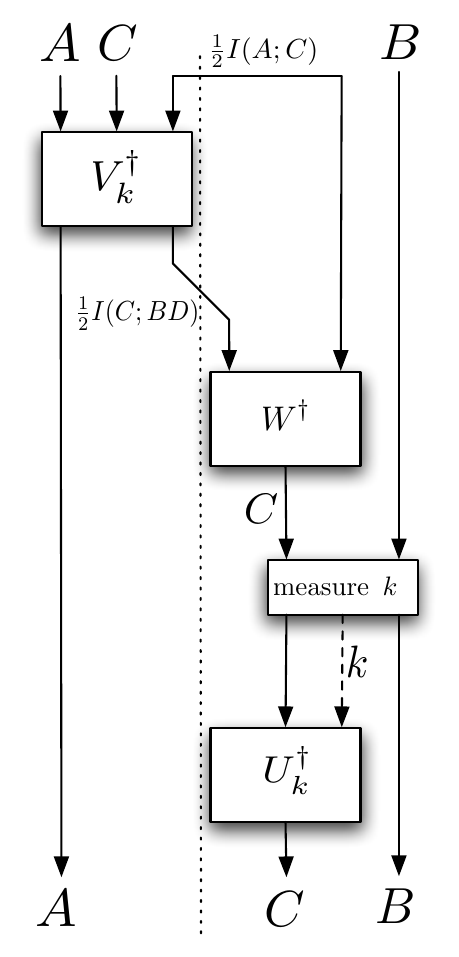}
\caption{Circuit for using Bob's side information to piggyback extra classical (or coherent) information through the fully quantum reverse Shannon circuit on the right of Figure~\ref{fig:FQSWcircuit}.}
\label{fig:piggyback}
\end{figure}  

Our proof of Theorem~\ref{theo:1shotfqrspiggyback} relies on two other lemmas.  First, we require the operator inequality  \cite{HN03}: 
\begin{lem}
If $0\leq \Pi \leq \openone$ and $\Pi \leq \Lambda$, then
\[\openone - \Lambda^{-1/2} \Pi \Lambda^{-1/2} \leq 2(\openone - \Pi) + 4(\Lambda-\Pi).\]
\label{lem:hayashi}
\end{lem}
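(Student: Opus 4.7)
The plan is to prove this as a direct operator-algebraic inequality in the spirit of the original Hayashi--Nagaoka argument. By a standard continuity/limit device, I may first assume $\Lambda$ is strictly positive and invertible, recovering the general case at the end by replacing $\Lambda$ with $\Lambda + \delta I$ and sending $\delta \to 0^+$ (the left and right sides of the claimed bound vary continuously in $\delta$).

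The natural starting point is the identity $I - \Lambda^{-1/2} \Pi \Lambda^{-1/2} = \Lambda^{-1/2}(\Lambda - \Pi)\Lambda^{-1/2}$. The difficulty is that $\Lambda^{-1/2}$ may be large on the small-eigenvalue subspace of $\Lambda$, so this identity by itself does not immediately give the desired bound in terms of $\Lambda - \Pi$ alone. The key tool for controlling the inverse factor is the elementary operator inequality
\[X Y^{-1} X^* + Y \;\geq\; X + X^*,\]
which is valid for any operator $X$ and any positive invertible $Y$, being a simple rearrangement of $(X - Y) Y^{-1} (X - Y)^* \geq 0$. I would apply this with a judicious choice such as $X = \Pi$ and $Y = \Lambda$, obtaining $\Pi \Lambda^{-1} \Pi + \Lambda \geq 2 \Pi$, and then conjugate and rearrange to relate $\Pi \Lambda^{-1} \Pi$ to $\Lambda^{-1/2} \Pi \Lambda^{-1/2}$ via the cyclicity trick $A^* A$ versus $A A^*$ having the same nonzero spectrum. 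The factors $2$ and $4$ in the stated bound then arise from combining this estimate with the parallelogram-type operator inequality $(A + B)(A + B)^* \leq 2 A A^* + 2 B B^*$, applied once to split a cross term between the $(I - \Pi)$ contribution (which accounts for $\Pi$ failing to equal the identity) and the $(\Lambda - \Pi)$ contribution (which accounts for the inversion correction).

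The main obstacle is not any deep estimate but rather the algebraic bookkeeping: one must arrange the decomposition so that all residual cross terms collapse precisely into a multiple of $(I - \Pi)$ and a multiple of $(\Lambda - \Pi)$, with the specific constants $2$ and $4$ rather than looser ones. A cruder splitting would produce the same shape of inequality with worse constants, but the factors $2$ and $4$ stated in Lemma~\ref{lem:hayashi} are what the downstream application in the proof of Theorem~\ref{theo:1shotfqrspiggyback} requires, and they emerge only from the sharp use of the parallelogram inequality at exactly the right place. Once that algebraic arrangement is identified, no further technical machinery is needed and the lemma follows.
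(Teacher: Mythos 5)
Note first that the paper never proves Lemma~\ref{lem:hayashi}: it is quoted directly from Hayashi and Nagaoka \cite{HN03}, so your proposal must be judged against that standard argument rather than against anything in this text. Your preliminary reduction to invertible $\Lambda$ is fine, and both tools you cite (the inequality $XY^{-1}X^\dagger + Y \geq X + X^\dagger$ and the cross-term/parallelogram bound) are correct statements. The gap is in how you intend to use them. Applying the first tool with $X = \Pi$, $Y = \Lambda$ gives a \emph{lower} bound $\Pi\Lambda^{-1}\Pi \geq 2\Pi - \Lambda$, and your plan to convert this into a statement about $\Lambda^{-1/2}\Pi\Lambda^{-1/2}$ ``via the cyclicity trick'' is invalid: the fact that $A^\dagger A$ and $AA^\dagger$ have the same nonzero spectrum does not allow you to transport an operator inequality against a \emph{third} operator (here $2\Pi-\Lambda$, or ultimately $\openone - \Pi$ and $\Lambda - \Pi$) from one to the other. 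Operator inequalities are vector-wise statements, not spectral ones, so this inference fails and nothing later in the sketch repairs it.

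More importantly, your assessment that only ``algebraic bookkeeping'' remains misses the one genuinely non-algebraic ingredient the lemma requires: operator monotonicity of the square root. The Hayashi--Nagaoka proof runs as follows. With $\Lambda$ invertible and $\Delta = \Lambda^{-1/2} - \openone$,
\begin{IEEEeqnarray*}{rCl}
\openone - \Lambda^{-1/2}\Pi\Lambda^{-1/2} &=& (\openone + \Delta)(\Lambda - \Pi)(\openone + \Delta) \\
&\leq& 2(\Lambda - \Pi) + 2\,\Delta(\Lambda - \Pi)\Delta \\
&\leq& 2(\Lambda - \Pi) + 2\,(\openone - \Lambda^{1/2})^2,
\end{IEEEeqnarray*}
where the first inequality is exactly your parallelogram bound applied to the cross terms, and the second uses $\Lambda - \Pi \leq \Lambda$ together with $\Delta\Lambda\Delta = (\openone - \Lambda^{1/2})^2$. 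The decisive step is then $(\openone - \Lambda^{1/2})^2 \leq (\openone - \Pi) + (\Lambda - \Pi)$, which upon expanding both sides is equivalent to $\Pi \leq \Lambda^{1/2}$; this follows from $\Pi^2 \leq \Pi \leq \Lambda$ and the operator monotonicity of $x \mapsto \sqrt{x}$, and it is precisely what tames $\Lambda^{-1/2}$ on the small-eigenvalue subspace. It is also where the constants come from: this term contributes one copy each of $\openone - \Pi$ and $\Lambda - \Pi$ on top of the $2(\Lambda - \Pi)$ already present, giving $2(\openone-\Pi) + 4(\Lambda - \Pi)$. Your sketch never produces the inequality $\Pi \leq \Lambda^{1/2}$, and no rearrangement of the two quadratic identities you invoke can yield it, so what is missing is an idea, not bookkeeping.
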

We also will use the following coherification lemma, which allows us to convert protocols that transmit classical information to ones that simulate coherent channels.  We give a short proof in the appendix. 
\begin{lem}
Given a pure state $\ket{\psi}^{XY}$ and $\kappa$ unitaries $U_k^{X\to X}$, let $\ket{\psi_k}^{XY} = U_k\ket{\psi}^{XY}$.  Given any other set of pure states $\ket{\psi'_k}^{XY}$ and a POVM $\{\Lambda_k^X\}$ on $X$,
there are complex phases $\al_k$ such that the 
isometry 
\[\CL^{X\to XK} = \sum_k  (\al_k U^\dagger_k\sqrt{\Lambda_k}) \ox \ket{k}^K \]
satisfies 
\[\frac{1}{\kappa}\sum_{k=1}^\kappa \bra{k}\bra{\psi}\CL\ket{\psi'_k} \geq 
1 - 2(P +\sqrt{1-F}).\]
where \[P = 1-\frac 1\kappa\sum_k\Tr\psi_k\Lambda_k,\hspace{.2in}
F = \frac 1\kappa\sum_k |\braket{\psi_k}{\psi_k'}|^2.\]
\label{lem:coherent}
\end{lem}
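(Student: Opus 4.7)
The plan is to compute $\bra{k}\bra{\psi}\CL\ket{\psi'_k}$ directly, choose the phases $\alpha_k$ optimally, and then estimate the resulting average using the operator inequality $\sqrt{\Lambda_k}\geq\Lambda_k$ (valid since $0\leq\Lambda_k\leq\openone$) together with convexity of the square root.  Expanding, $\bra{k}\bra{\psi}\CL\ket{\psi'_k}=\alpha_k\bra{\psi}U_k^\dagger\sqrt{\Lambda_k}\ket{\psi'_k}=\alpha_k\bra{\psi_k}\sqrt{\Lambda_k}\ket{\psi'_k}$, where I use $U_k\ket{\psi}=\ket{\psi_k}$ and extend $\sqrt{\Lambda_k}$ trivially on the $E$ factor.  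Taking each $\alpha_k$ to be the complex conjugate of the phase of $\bra{\psi_k}\sqrt{\Lambda_k}\ket{\psi'_k}$ (any unit phase works when that overlap vanishes) produces $|\alpha_k|=1$, which combined with $U_k U_k^\dagger=\openone$ and the POVM completeness $\sum_k\Lambda_k=\openone$ gives $\CL^\dagger\CL=\openone_D$, confirming $\CL$ is an isometry and reducing the claim to $\frac{1}{\kappa}\sum_k|\bra{\psi_k}\sqrt{\Lambda_k}\ket{\psi'_k}|\geq 1-2(P+\sqrt{1-F})$.

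Next, I would split each summand via the triangle inequality against the ``ideal'' overlap $\bra{\psi_k}\sqrt{\Lambda_k}\ket{\psi_k}$, obtaining
$|\bra{\psi_k}\sqrt{\Lambda_k}\ket{\psi'_k}|\geq\bra{\psi_k}\sqrt{\Lambda_k}\ket{\psi_k}-|\bra{\psi_k}\sqrt{\Lambda_k}(\ket{\psi'_k}-\ket{\psi_k})|$.  The first term is lower bounded by $\bra{\psi_k}\Lambda_k\ket{\psi_k}=\Tr\psi_k\Lambda_k$ via the operator inequality above, and the second by $\norm{\sqrt{\Lambda_k}\ket{\psi_k}}\cdot\norm{\ket{\psi'_k}-\ket{\psi_k}}\leq\norm{\ket{\psi'_k}-\ket{\psi_k}}$ using Cauchy--Schwarz and $\sqrt{\Lambda_k}\leq\openone$.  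Averaging over $k$ therefore yields $\frac{1}{\kappa}\sum_k|\bra{\psi_k}\sqrt{\Lambda_k}\ket{\psi'_k}|\geq(1-P)-\frac{1}{\kappa}\sum_k\norm{\ket{\psi'_k}-\ket{\psi_k}}$.

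Finally, for the norm difference I would absorb a global phase into each $\ket{\psi'_k}$ so that $\braket{\psi_k}{\psi'_k}$ is real and nonnegative; this rephasing does not alter $F$ and is compensated by an adjustment of $\alpha_k$, so it comes for free.  Then $\norm{\ket{\psi'_k}-\ket{\psi_k}}^2=2-2|\braket{\psi_k}{\psi'_k}|\leq 2(1-|\braket{\psi_k}{\psi'_k}|^2)=2(1-F_k)$, and Jensen's inequality applied to the concave function $\sqrt{\cdot}$ gives $\frac{1}{\kappa}\sum_k\norm{\ket{\psi'_k}-\ket{\psi_k}}\leq\sqrt{2(1-F)}\leq 2\sqrt{1-F}$.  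Combining this with the previous step and using $P\geq 0$ delivers the stated bound.  The only subtle point is the phase bookkeeping between $\alpha_k$ and the relative phases of $\ket{\psi'_k}$, but since these enter the expression multilinearly they can be chosen independently, so this presents no real obstacle.
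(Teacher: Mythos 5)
Your proof is correct, and it reaches the stated bound by a somewhat different mechanism than the paper. Both arguments share the same skeleton: choose the phases $\al_k$ to make each overlap $\bra{k}\bra{\psi}\CL\ket{\psi'_k}$ real and nonnegative, compare the actual input $\ket{\psi'_k}$ with the ideal one $\ket{\psi_k}$, use the operator inequality $\sqrt{\Lambda_k}\geq \Lambda_k$ (valid since $0\leq\Lambda_k\leq\openone$) to bring in $\Tr\psi_k\Lambda_k$, and average with convexity. Where you differ is in how the comparison to $\ket{\psi_k}$ is made: the paper first lower bounds the overlap by its square, interprets that square as a fidelity of the pure state $\ket{k}\ket{\psi}$ with $\CL(\psi'_k)$, and then passes to $\CL(\psi_k)$ by the trace-distance bound $\norm{\psi_k-\psi'_k}_1\leq 2\sqrt{1-F_k}$ from (\ref{eqn:fid2tr}), whereas you stay entirely at the amplitude level, using the reverse triangle inequality together with Cauchy--Schwarz and $\Norm{\sqrt{\Lambda_k}\ket{\psi_k}}\leq 1$, and controlling $\Norm{\ket{\psi'_k}-\ket{\psi_k}}$ after an optimal rephasing of $\ket{\psi'_k}$ (which is legitimate, since the modulus $|\bra{\psi_k}\sqrt{\Lambda_k}\ket{\psi'_k}|$ and the quantities $F_k$ are invariant under such rephasing, the $\al_k$ having already absorbed all phase dependence). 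Your route is slightly more elementary---no fidelity/trace-distance conversion is needed---and it even yields the marginally stronger intermediate bound $1-P-2\sqrt{1-F}$, compared with the paper's $(1-P)^2-2\sqrt{1-F}$; both of course imply the claimed $1-2(P+\sqrt{1-F})$.
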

\vspace{.1in}


\begin{proof}[Proof of Theorem~\ref{theo:1shotfqrspiggyback}]  
As in the statement of the theorem, we fix nearby states $\ket{\ph}$ and $\ket{\phi}$ and let $W^{C\to S\h{B}}$ be any unitary decomposition of $C$ into subsystems. Independently choose $\kappa$ unitaries $\{U_1,\dotsc,U_\kappa\}$ according to the Haar measure on $\CU(C)$.  
For each $k$, define the states 
\begin{eqnarray*}
\ket{\psi_k}^{ABCD} &=& U_k\ket{\psi}^{ABCD} \\
\ket{\psi_k}^{AS\h{B}BD} &=& WU_k\ket{\psi}^{ABCD} = W\ket{\psi_k}^{ABCD}. 
\end{eqnarray*}
We define
the \emph{decoupling fidelity} for $\psi^{\h{B}BD}_k$ as 
\[F_k = F(\psi_{k}^{\h{B}BD},\pi^{\h{B}}\ox\psi_k^{BD}).\]
Since $\ket{\Phi}^{\h{A}\h{B}}\ket{\psi}^{ABCD}$ 
is a purification of $\pi^{\h{B}}\ox\psi^{BD}$,  Uhlmann's theorem implies that there is an isometry $V_k^{AS \to \h{A}AC'}$ under which 
\begin{eqnarray*}
F_k=\big|\bra{\psi_k}^{AS\h{B}BD}V_k^\dagger\ket{\Phi}^{\h{A}\h{B}}\ket{\psi}^{ABCD}\big|^2.
\label{eqn:fqsw_ep}
\end{eqnarray*}
To send the message $k$, Alice will apply the isometry 
$\CV_k = V_k^\dagger$.
We now define 
\[\ket{\psi'_k}^{ABCD} =  W^\dagger \CV_k\ket{\Phi}^{\h{A}\h{B}}\ket{\psi}^{ABCD},\]
which is the state that is created after Alice performs $\CV_k$ and gives $S$ to Bob, who then applies $W^{\dagger S\h{B} \to C}$. 
We may therefore equivalently write 
\begin{equation*}
F_k=\big|\bra{\psi_k}^{ABCD} \ket{\psi_k'}^{ABCD}\big|^2. 
\end{equation*}  
 The \emph{average decoupling fidelity} is a random variable
\begin{eqnarray*}
F_\ave &=& \frac 1\kappa \sum_{k=1}^\kappa F_k
\end{eqnarray*}
that depends on the random choice of unitaries.  We lower bound its expectation 
as follows. 
Define the average states with respect to Haar measure $dU$ as
\begin{eqnarray*}
\overline{\psi}^{ABCD} &=& \int_{\CU(C')}  \psi^{ABCD}_U dU\\
\overline{\psi}^{AS\h{B}BD} &=& W\overline{\psi}^{ABCD} W^\dagger. 
\end{eqnarray*}

We now use the robust one-shot decoupling lemma (Lemma~\ref{lem:robustdecoupling}) to bound the expectation of $F_\ave$ over the random choice of unitaries:
\begin{IEEEeqnarray*}{rCl}
1 - \E F_\ave
 &=& 1 - \frac{1}{\kappa}\sum_{k=1}^\kappa \E F_k \nn \\
&=& 1 - F(\overline{\psi}^{\h{B}BD},\pi^{\h{B}}\ox\psi^{BD}) \nn \\
&\leq& 2\ep  + \sqrt{\frac{|C| \norm{\ph^{BD}}_0
\norm{\ph^{BCD}}_2^2}{|S|^2}}.
\end{IEEEeqnarray*}
A related estimate to be used later is 
\begin{IEEEeqnarray}{rCl}
\E\sqrt{1 -  F_\ave} &\leq& \sqrt{1-\E F_\ave} \nn \\
&\leq& \sqrt{2\ep} + \left( \frac{|C| \norm{\ph^{BD}}_0
\norm{\ph^{BCD}}_2^2}{|S|^2}\right)^{1/4},\hspace{.2in} 
\label{eqn:EFbound}
\end{IEEEeqnarray}
which follows by concavity and the inequality $\sqrt{x + y} \leq \sqrt{x} + \sqrt{y}$, valid for $x,y\geq 0$.

Next, we consider Bob's ability to distinguish the states $\psi'^{BC}_k$.
For this, we design a measurement that distinguishes the nearby states $\phi_{k}^{BC} =  U_k\phi^{BC}U_k^\dagger$.  
Let $\Pi$ be the projection onto the support of $\phi^{BC}$ and define 
\[\Pi_k = U_k \Pi U_k^{\dagger},\]
 while defining the ``pretty good measurement"
\begin{IEEEeqnarray*}{rClrCl}
\Lambda &=& \sum_{k=1}^\kappa \Pi_{k},\hspace{.2in} & 
\Lambda_k &=& \Lambda^{-1/2}\Pi_{k}\Lambda^{-1/2}.
\end{IEEEeqnarray*}
The probability that this measurement fails to identify the state $\psi_k'^{BC}$ is 
\[P_k = \Tr(\openone-\Lambda_k)\psi_k'^{BC}.\] 
Observe that 
\begin{IEEEeqnarray*}{rCl}
\big|P_k - \Tr(\openone - \Lambda_k)\phi_{k}^{BC}\big| 
&\leq& \Norm{\psi'^{BC}_k - \phi^{BC}_{k}}_1  \\
&\leq& \Norm{\psi'^{BC}_k - \psi^{BC}_{k}}_1 \!\!+ \!\Norm{\psi^{BC}_k - \phi^{BC}_{k}}_1  \\
&\leq& 2\sqrt{1-F_k} + \ep \\
&\equiv& D_k.
\end{IEEEeqnarray*}
Because $x\mapsto \sqrt{x}$ is concave, we have 
\[D_\ave \equiv \frac{1}{\kappa}\sum_{k=1}^\kappa D_k \leq \ep + 2\sqrt{1-F_\ave}.\]
Therefore, the average of the $P_k$ can be bounded using Lemma~\ref{lem:hayashi}, obtaining a random variable satisfying   
\begin{IEEEeqnarray*}{rCl}
P_\ave&\equiv& \frac{1}{\kappa} \sum_{k=1}^\kappa P_k \\
&\leq& D_\ave \!+\! \frac{1}{\kappa} \sum_{k=1}^\kappa \Tr(\openone-\Lambda_k)\phi_{k}^{BC} \\
 &\leq&  D_\ave\! +\! \frac{1}{\kappa}\sum_{k=1}^\kappa\Big(2\big(1-\Tr \Pi_{k}\phi^{BC}_{k}\big)\! + 4\!\sum_{k'\neq k}\!\Tr\Pi_{{k'}}\phi_{k}^{BC}\Big)\\
&=&  D_\ave \!+ \!\frac 4\kappa\sum_{k=1}^\kappa\sum_{k'\neq k}\Tr\Pi_{{k'}}\phi_{k}^{BC} 
\end{IEEEeqnarray*}
The last line holds because for each $k$, $\Pi_k$ projects onto the support of $\phi_k^{BC}$.
By taking the expectation over the random choice of unitaries, this yields 
\begin{IEEEeqnarray}{rCl}
\E  P_\ave &\leq& \E D_\ave+ 4 \kappa \E\Tr\Pi_{1}\phi^{BC}_{2} \nn\\
&=&  \E D_\ave+ 4 \kappa \Tr\big[\E\Pi_{1} \,\E\phi^{BC}_{2}\big] \nn\\
&=&  \E D_\ave+ 4 \kappa \Tr\big[\E\Pi_{1}(\pi^C\ox\phi^B_{2})\big] \nn \\
&\leq& \E D_\ave + \frac{4 \kappa \norm{\phi^{BC}}_0 \norm{\phi^B}_\infty}{|C|} \nn \\
&\leq&  2\E \sqrt{1-F_\ave} +\ep + \frac{4 \kappa \norm{\phi^{BC}}_0 \norm{\phi^B}_\infty}{|C|}.
 \label{eqn:EPbound}
\end{IEEEeqnarray}
We now apply Lemma~\ref{lem:coherent} with $X = BC$ and $Y = AD$, giving an isometry $\CL^{BC\to BCK}$ under which  
\begin{equation*}\frac 1\kappa \sum_{k=1}^\kappa \bra{k}\bra{\psi} \CL\ket{\psi'_k} \geq 1- 2(P_\ave + \sqrt{1-F_\ave}). \label{eqn:coh1}
\end{equation*}
Taking expectations, we find that 
\begin{IEEEeqnarray*}{rCl}
\IEEEeqnarraymulticol{3}{l}{
1 - \E\frac 1\kappa \sum_{k=1}^\kappa \bra{k}\bra{\psi} \CL\ket{\psi'_k}} \hspace{2in} \\
 \hspace{0in} &\leq&  2\E\sqrt{1- F_\ave}  + 2\E P_\ave\\
 &\leq& 4\E\sqrt{1-F_\ave} + \ep +  \frac{4 \kappa \norm{\phi^{BC}}_0 \norm{\phi^B}_\infty}{|C|} \\
 &\leq& 6\sqrt{\ep} +   4\left(\frac{|C| \norm{\ph^{BD}}_0
\norm{\ph^{BCD}}_2^2}{|S|^2}\right)^{\!\!1/4}
 \!\!\!\!+  \frac{4 \kappa \norm{\phi^{BC}}_0 \norm{\phi^B}_\infty}{|C|}.
\end{IEEEeqnarray*}
The second inequality is by  (\ref{eqn:EPbound}) while the third is due to (\ref{eqn:EFbound}) and holds for $\ep \leq  (6-4\sqrt{2})^2$. We may then conclude that for a particular value of the randomness, the same bound holds without the expectations.  Finally, we define Bob's decoding isometry to be $\CW = \CL W^\dagger$, completing the proof.  
\end{proof}


\section{An operational proof of strong subadditivity \label{section:ssad}}
Let $\ket{\psi}^{ABCD}$ be an arbitrary pure state.  In this section, we show how our results lead to an operational proof of strong subaddivity, i.e.\ that $I(C;D|B) \geq 0$.  
By discarding some resources on the right in Theorem~\ref{theo:aux}, we obtain:  
\begin{eqnarray*}
\psi^{AC|B} + \12 I(C;BD) [q\to q]  + \12 I(A;C)[qq]\hspace{.9in} \\  
\succeq \12 I(B;C)[q\to q]. 
\end{eqnarray*}
Intuitively, it makes sense that we should have 
\[I(C;BD) - I(B;C) = I(C;D|B) \geq 0\]
since otherwise, a noiseless qubit channel could be used to faithfully transmit more than one qubit in the presence of entanglement between the sender and receiver.  Of course this inequality is guaranteed by strong subadditivity.  However, our aim is to provide an alternative proof of this fundamental inequality. 
The above asymptotic resource inequality implies that for every $\ep, \delta > 0$ and all sufficiently large n, we have 
\begin{IEEEeqnarray}{rCl}
\Psi^{L|L'} \!\!+ \! {\Big\lfloor \! \mbox{$\frac n2$} I(C;BD) \!+ \! n\delta\!\Big\rfloor} [q\to q] 
\!\geq_\ep \! \!{\Big\lfloor\! \mbox{$\frac n2$} I(B;C)\! \Big\rfloor} [q\to q].\,\,\,\,\,\,\,\,\,\,\,
\label{eqn:ssadresource}
\end{IEEEeqnarray}
$\Psi^{L|L'}$ represents prior entanglement between Alice and Bob. Its precise form is irrelevant for our argument; we lose generality by assuming it is pure.    
Now consider the following lemma, whose proof we delay until the end of this section.
\begin{lem}
Let $X$ and $Y$ be quantum systems and let $\ket{\Psi}^{L|L'}$ be arbitrary.  Consider an attempted simulation 
\[\CN^{X\to X}\big( \rho^X \big)  = \CD^{YL'\to X}\circ \big(\CE^{XL\to Y}\ox \openone^{L'}\big)\big(\rho^X\ox \Psi^{L|L'}\big)\]
of the identity quantum channel $\id^{X\to X}$ by the possibly smaller one $\id^{Y\to Y}$, assisted by the bipartite state $\ket{\Psi}^{L|L'}$.   If $\ket{\Phi}^{X'X}$ is maximally entangled, then the \emph{entanglement fidelity} \cite{S96} satisfies 
\begin{eqnarray}
F\big(\ket{\Phi}^{X'X},(\openone^{X'}\ox \CN)(\Phi^{X'X})\big) \leq \frac{|Y|}{|X|}.
\label{eqn:entfid}
\end{eqnarray}
\label{lem:fidelitybound}
\end{lem}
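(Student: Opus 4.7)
To prove the bound, the first move is to pass to the Stinespring dilations $V_{\CE}^{KL\to QE_1}$ of the encoder and $V_{\CD}^{QL'\to KE_2}$ of the decoder.  Running the full simulation on half of $\ket{\Phi_k}^{K'K}\ket{\Psi}^{LL'}$ produces a pure state $\ket{\Omega}^{K'KE_1E_2}$ whose partial trace equals $(\openone^{K'}\ox\CN)(\Phi_k^{K'K})$; a direct computation then gives
\[
F \;=\; \bigl\|(\bra{\Phi_k}^{K'K}\ox I^{E_1E_2})\ket{\Omega}\bigr\|^2.
\]
Since $K'$ is never touched during the protocol, $\Omega^{K'}=\pi^{K'}$, and the Schmidt decomposition of $\ket{\Omega}$ across the cut $K':KE_1E_2$ has exactly $|K|$ equal-weight terms, say $\ket{\Omega}=\frac{1}{\sqrt{|K|}}\sum_k\ket{k}^{K'}\ket{u_k}^{KE_1E_2}$.

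Expanding $\bra{\Phi_k}=\frac{1}{\sqrt{|K|}}\sum_i\bra{i,i}$, the above inner product becomes $\frac{1}{|K|}\sum_k\ket{w_k}$ with $\ket{w_k}=(\bra{k}^K\ox I^{E_1E_2})\ket{u_k}$, and Cauchy--Schwarz produces
\[
F \;\leq\; \frac{1}{|K|}\sum_{k=1}^{|K|}\bra{k}^K\CD(t_k^{QL'})\ket{k}^K,
\]
where $t_k^{QL'}$ is the $QL'$-reduction of the $k$-th Schmidt vector of the intermediate pure state $\ket{\Omega_1}^{K'QE_1L'}$ that exists after encoding but before decoding.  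The structural input I would exploit at this point is that $V_{\CE}$ acts only on $KL$ and leaves $K'$ and $L'$ untouched, which forces the product decomposition $\Omega_1^{K'L'}=\pi^{K'}\ox\Psi^{L'}$; via the Schmidt structure of $\ket{\Omega_1}$ this is equivalent to the orthogonality relation $\Tr_{QE_1}\ket{t_k}\!\bra{t_{k'}}^{QE_1L'}=\delta_{kk'}\Psi^{L'}$.

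The remaining step is to show that the resulting sum $\sum_k\Tr[\CD^\dagger(\ket{k}\!\bra{k}^K)\,t_k^{QL'}]$, which is the total success probability of the POVM $\{\CD^\dagger(\ket{k}\!\bra{k}^K)\}$ on $QL'$ in distinguishing the states $\{t_k^{QL'}\}$, is bounded by $|Q|$.  This is where the $Q$-bottleneck of the simulation enters: the orthogonality relation above forces the purifying data $\ket{t_k}^{QE_1L'}$ to occupy $|K|$ mutually orthogonal subspaces of dimension $r=\mathrm{rank}\,\Psi^{L'}$ inside $QE_1$, while the states $t_k^{QL'}$ themselves, all having the identical $L'$-marginal $\Psi^{L'}$, can only be told apart through the $Q$-register.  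The main obstacle is converting this geometric picture into the scalar inequality $\sum_k\Tr[M_k t_k^{QL'}]\leq|Q|$; once this combinatorial step is in hand, division by $|K|$ yields the claimed bound $F\leq|Q|/|K|$.  A naive Cauchy--Schwarz or POVM-completeness estimate at this point would introduce a spurious factor of $|L'|$, so the product structure $\Omega_1^{K'L'}=\pi^{K'}\ox\Psi^{L'}$ must be used essentially rather than ornamentally.
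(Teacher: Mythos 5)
There is a genuine gap, and it occurs exactly at the step you defer: the inequality $\sum_k\Tr\big[M_k\,t_k^{QL'}\big]\leq|Q|$ with $M_k=\CD^\dagger(\proj{k}^K)$ is not merely unproven, it is false under all of the structural constraints you list. Take $\ket{\Psi}^{L|L'}$ maximally entangled of Schmidt rank $|Q|$, $|K|=|Q|^2$, and let the encoder measure $K$ in the computational basis, apply the Pauli $X^aZ^b$ (with $k=(a,b)$) to $L$, and output $L$ as $Q$, dumping the record into the environment $E_1$. Then the states $\ket{t_k}^{QE_1L'}$ are orthonormal, they satisfy your relation $\Tr_{QE_1}\ket{t_k}\bra{t_{k'}}=\delta_{kk'}\Psi^{L'}$, and the reduced states $t_k^{QL'}$ are the $|Q|^2$ mutually orthogonal Bell states, perfectly distinguishable by a joint measurement on $QL'$; hence the sum equals $|Q|^2$, not $|Q|$. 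The heuristic that states with identical $L'$ marginals ``can only be told apart through the $Q$-register'' overlooks precisely the superdense-coding effect: the decoder measures $QL'$ jointly, so prior entanglement doubles the number of distinguishable messages. Consequently your route can at best yield $F\leq|Q|^2/|K|$, which is strictly weaker than the claimed $|Q|/|K|$. The loss happens earlier, at the estimate $\bigl\|\sum_k v_k\bigr\|^2\leq|K|\sum_k\|v_k\|^2$: this decoheres the sum over $k$ and reduces the problem to classical distinguishability, whereas the content of the lemma is exactly that coherent (quantum) transmission through the $|Q|$-dimensional bottleneck cannot be boosted by entanglement, even though classical transmission can.

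For comparison, the paper's proof never decoheres. It Schmidt-decomposes $\ket{\Psi}^{L|L'}=\sum_\ell\sqrt{\lambda_\ell}\ket{\ell}^L\ket{\ell}^{L'}$, writes Kraus operators of the encoder and decoder in block form $E_i=[E_{i1}\cdots E_{i|L|}]$, $D_j=[D_{j1}\cdots D_{j|L|}]$, so that $\CN$ has Kraus operators $N_{ij}=\sum_\ell\sqrt{\lambda_\ell}D_{j\ell}E_{i\ell}$, and evaluates the entanglement fidelity exactly as $\frac{1}{|K|^2}\sum_{ij}|\Tr N_{ij}|^2$. The dimension bottleneck enters through a rank-based Cauchy--Schwarz estimate, $|\Tr D_{j\ell}E_{i\ell}|^2\leq|Q|\,\Tr\big[E_{i\ell}^\dagger D_{j\ell}^\dagger D_{j\ell}E_{i\ell}\big]$ (since $D_{j\ell}E_{i\ell}$ has rank at most $|Q|$), after which the block trace-preservation identities $\sum_iE_{i\ell}^\dagger E_{i\ell}=\openone^K$ and $\sum_jD_{j\ell}^\dagger D_{j\ell}=\openone^Q$ collapse the sum to $|Q|\cdot|K|$. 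If you want to repair your argument you must keep the coherent sum over $k$ (equivalently over the Kraus/Schmidt indices) inside the absolute value until a rank bound on the composed encoder--decoder operators can be applied; any step that first converts the fidelity into an average success probability of a discrimination task on $QL'$ is doomed to the $|Q|^2$ barrier.
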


Plugging in $|Y| = 2^{\lfloor \frac n2 I(C;BD) +n\delta\rfloor}$ and $|X| = 2^{\lfloor \frac n2 I(B;C)\rfloor}$ to (\ref{eqn:entfid}), we find that the entanglement fidelity is upper bounded by $2^{\lfloor \frac n2 I(C;D|B) +n(\delta + \frac 1n)\rfloor}$.  
Suppose now that strong subaddivity was not satisfied.  Then, for some sufficiently small $\delta > 0$ the entanglement fidelity would tend to zero exponentially fast with $n$.  However, (\ref{eqn:ssadresource}) implies that for sufficiently large $n$, the entanglement fidelity can be made arbitrarily close to 1.  Therefore  $I(C;D|B) \geq 0$.
\QED

\vspace{.1in}
\begin{proof}[Proof of Lemma~\ref{lem:fidelitybound}]
Let $\{E_i\}$ and $\{D_j\}$ be Kraus matrices for the encoding $\CE^{XL\to Y}$ and decoding $\CD^{YL'\to X}$. 
Fixing orthonormal bases of $L$ and $L'$ that Schmidt-decompose the assistance state as 
\begin{IEEEeqnarray*}{rCl}
\ket{\Psi}^{L|L'} &=& \sum_\ell \sqrt{\lambda_\ell}\ket{\ell}^L\ket{\ell}^{L'},
\end{IEEEeqnarray*}
the above Kraus matrices can be written in block form 
\[E_i = \Big[E_{i1} \cdots\, E_{i |L|}\Big],\,\,\,\,\,\,\,\,
D_j = \Big[D_{j1} \cdots\, D_{j |L|}\Big].\]
Because these maps are trace-preserving, we have
\[\sum_{i} E_i^\dagger E_i = \openone^{XL},\,\,\,\,\,\,\,\, \sum_j D_j^\dagger D_j = \openone^{YL}\]
which in turn implies that 
\begin{equation}
\sum_{i} E_{i\ell}^\dagger E_{i\ell'} = \delta_{\ell\ell'}\openone^{X},\,\,\,\,\,\,\,\, \sum_j D_{j\ell}^\dagger D_{j\ell'} = \delta_{\ell\ell'}\openone^{Y}.
\label{eqn:blockcondition}
\end{equation}
The overall map $\CN^{X\to X}$ has Kraus matrices  
\[N_{ij} = \sum_{\ell} \sqrt{\lambda_\ell} D_{j\ell} E_{i\ell}.\]
The entanglement fidelity (\ref{eqn:entfid}) can be written as \cite{S96}:
\begin{IEEEeqnarray*}{rCl}
F\big(\ket{\Phi}^{X'X},(\openone^{X'}\ox \CN)(\Phi^{X'X})\big) &=& \sum_{ij}\big|\!\Tr N_{ij} \pi^X\big|^2 \\
&=& \frac{1}{|X|^2}\sum_{ij}\big|\!\Tr N_{ij}\big|^2.
\end{IEEEeqnarray*}
On the other hand, 
\begin{IEEEeqnarray}{rCl}
\sum_{ij}\big|\!\Tr N_{ij}\big|^2 
&=& \sum_{\ell ij} \lambda_\ell  \big|\!\Tr D_{j\ell}E_{i\ell}\big|^2 \nn \\
&\leq& \sum_{\ell ij} \lambda_\ell |Y|\Tr E_{i\ell}^\dagger D_{j\ell}^\dagger D_{j\ell}E_{i\ell} \label{eqn:ssadderiv1}\\
&=& |Y|\sum_\ell \lambda_\ell \Tr\left[\sum_i E_{i\ell}^\dagger\Big(\sum_j D_{j\ell}^\dagger D_{j\ell}\Big)E_{i\ell}\right] \nn \\
&=& |Y|\sum_\ell\lambda_\ell \Tr \openone^X \label{eqn:ssadderiv2}\\
&=& |Y|\cdot |X|. \nn
\end{IEEEeqnarray}
Above, (\ref{eqn:ssadderiv1}) holds because for each $i,j$ and $\ell$, there is a rank $|Y|$ projection $P$ satisfying $PD_{j\ell}E_{i\ell} = D_{j\ell}E_{i\ell}$, while the Cauchy-Schwartz inequality implies
\begin{eqnarray*}
\big|\!\Tr PD_{j\ell}E_{i\ell}\big|^2  &\leq& \big(\!\Tr P^\dagger P\big)\cdot \big( \!\Tr E_{i\ell}^\dagger D_{j\ell}^\dagger D_{j\ell} E_{i\ell}\big) \\
&=& |Y| \Tr E_{i\ell}^\dagger D_{j\ell}^\dagger D_{j\ell} E_{i\ell}.
\end{eqnarray*}
Equation (\ref{eqn:ssadderiv2}) follows from the identities (\ref{eqn:blockcondition}) and the last line holds because the squares of the Schmidt coefficients sum to unity.  This proves the lemma.
\end{proof}


\section{Discussion \label{section:discussion}}

State redistribution is the most general unidirectional two-terminal fully quantum source coding problem.  It consists of moving a subsystem of a multipartite pure state between two spatially separated parties when the sender and receiver each hold subsystems, which are regarded as quantum side information.  We have identified the cost, in terms of entanglement and transmitted qubits, for performing state redistribution, by presenting a protocol that uses these two resources at optimal rates, i.e.\ that matches the Luo-Devetak outer bound \cite{LD06}.   Our proof that this protocol exists consists of a new resource inequality that, 
 when combined with other known results, implies that an optimal protocol exists.  
The optimal lower bound on the achievable communication rates provides the first known operational interpretation of quantum conditional mutual information.  
Technically, we provide an interpretation for one half of the conditional mutual information; nonetheless, we observed in \cite{DY06b} that by teleportation, we obtain a bona fide interpretation of conditional mutual information (i.e.\ without the 1/2) as the optimal communication rate when only classical communication is allowed in the sense of \cite{HOW05, HOW05b}.  While operational interpretations of quantum mutual information are known \cite{GPW05,SW06}, these do not simply lead to one for the conditional quantity by naively subtracting mutual informations.  Instead, one requires a proof consisting of a protocol (as found here) achieving rates arbitrarily close to the desired quantity, together with a converse (as in \cite{LD06}) demonstrating optimality. 

Our interpretation provides an explanation of the quadripartite pure state identity 
$I(C;D|A) = I(C;D|B)$ because the essential reversibility of our protocol implies that the communication cost is the same in both directions.  Indeed, with the exception of the Schumacher compression step, which is essentially reversible because it succeeds with high probability, the protocol constructed to prove Theorem~\ref{theo:fqrspiggyback} consists entirely of isometries.  Moreover, the additional steps used to arrive at Theorem~\ref{theo:main} introduce at most a ``sublinear amount" of nonunitarity.  Throughout this paper, we have adhered to the convention of always conditioning on Bob's side information, although this was an arbitrary notational choice.  We thus interpret quantum conditional mutual information --  as it appears throughout this paper --  as a measure of the quantum correlations between $C$ and $D$, from the perspective of \emph{either} $A$ or $B$.   

\begin{figure}
\centering
\begin{tabular}{c|cccc}
& $A$ & $B$ & $D$\\
\hline 
state redistribution & $\bullet$  & $\bullet$ & $\bullet$ \\
state merging & $\circ$  & $\bullet$ & $\bullet$ \\
state splitting & $\bullet$  & $\circ$ & $\bullet$ \\
Schumacher compression   & $\circ$  & $\circ$ & $\bullet$ \\
entanglement concentration & $\circ$  & $\bullet$ & $\circ$ \\
entanglement dilution & $\bullet$  & $\circ$ & $\circ$ \\
concentration + dilution & $\bullet$  & $\bullet$ & $\circ$ 
\end{tabular}
\caption{State redistribution reduces to other known problems when various subsystems, represented here by open circles, are trivial ($C$ is always nontrivial in these settings). We exclude the trivial problem consisting of just a pure state on $C$, which can be solved with no nonlocal resources at all.}
\label{figure:specialcases}
\end{figure}

In Figure~\ref{figure:specialcases}, we illustrate several special cases of state redistribution.  Our protocol yields optimal protocols for the problems listed there, at least with regard to the rates at which resources are consumed or generated.  
Respectively disregarding Alice's or Bob's side information  gives optimal protocols for state merging and state splitting (recall our nomenclature from Section~\ref{sec:fqrsproof}), which can also be obtained by simply  combining Theorem~\ref{theo:types}  and the robust decoupling lemma (Lemma~\ref{lem:robustdecoupling}). Furthermore, when both parties lack side information we recover (albeit somewhat trivially) Schumacher data compression.  
As pointed out  in \cite{DY06b}, the formal time-reversal duality between merging and splitting observed in \cite{D05b} is embodied in a more natural way by our new protocol, which is in fact self-dual with respect to time reversal.  In \cite{DY06b}, we also observed the intuitively satisfying -- but nonetheless surprising -- fact that successive redistribution can be performed optimally using the optimal redistribution protocol.  

Other protocols are obtained when $D$ is trivial, in which case strong subadditivity is saturated $I(C;D|B) = 0$ and thus any positive communication rate is achievable by our protocol.  When either $A$ or $B$ is also trivial, we respectively obtain protocols for entanglement concentration and dilution \cite{BBPS96}, and when both $A$ and $B$ are nontrivial, state redistribution gives an alternate approach to first concentrating the $AC|B$ entanglement then diluting the $A|CB$ entanglement \cite{HL04}, each of which gives a net entanglement cost of $H(A) - H(B)$.  Note that \cite{HW03,HL04} showed that diluting EPR entanglement into i.i.d.\ pure states requires a nonzero (but sublinear) communication cost to achieve any constant error, while exponentially small error requires any nonzero communication rate.  We therefore must expect the same with even the most generic state redistribution instances that saturate strong subadditivity. 
Here, the states are such that $C$ is conditionally decoupled from the reference $D$ given $A$ or $B$ and, 
up to local unitaries, have the form 
\[\sum_x \sqrt{p_x} \ket{x}^{A'}\ket{x}^{B'}\ket{\psi_x}^{A_CB_CC}\ket{\phi_x}^{A_DB_DD}.\]
As pointed out in \cite{DY06b} (with a sign error in the published version) this type of state can be redistributed with entanglement cost 
\[\sum_x p_x \big(H(A_C) - H(B_C)\big)_{\psi_x}.\]
An interesting problem that we do not address in this paper is to more carefully account for sublinear terms in the overall cost for redistribution.  Besides giving more precise estimates when the overall rates are zero,  a more careful study might provide a better understanding of transformations between non-maximally entangled states as considered in \cite{HL04, FL05}.  In particular, we note that while exponentially small error is generically possible with our protocol, this might not be possible when sublinear amounts of resources are used.

Because the main technical part of our proof is proved in a one-shot fashion, it could possibly be applied to more general quantum sources that do not satisfy the i.i.d.\ property but are instead structured in some other way; for instance, to ground states of many-body Hamiltonians in statistical physics.  In particular, there are intriguing connections between state redistribution and topological entanglement entropy, which is a characteristic of topologically ordered ground states of gapped 2D quantum spin systems.  These connections will be pursued elsewhere.

It could be useful for such applications to have a more direct proof of Theorem~\ref{theo:main} that does not use coherent channels or the cancellation lemma.  While it would be most desirable to have a one-shot version of Theorem~\ref{theo:main}, it might be more natural (see Note Added) to find a one-shot version of the related resource inequality 
\begin{IEEEeqnarray*}{rCl}
\psi^{AC|B} + \12 I(C;D|B) [q\to q] + \12 I(A;C)[qq] \succeq \hspace{.7in}\\
\hspace{.8in}\psi^{A|BC} + \12 I(B;C)[qq].
\end{IEEEeqnarray*}
The corresponding circuit for this case makes the time-reversal symmetry most apparent, as illustrated in Figure~\ref{fig:symmetry}.
\begin{figure}
\centering
\includegraphics[scale=.7]{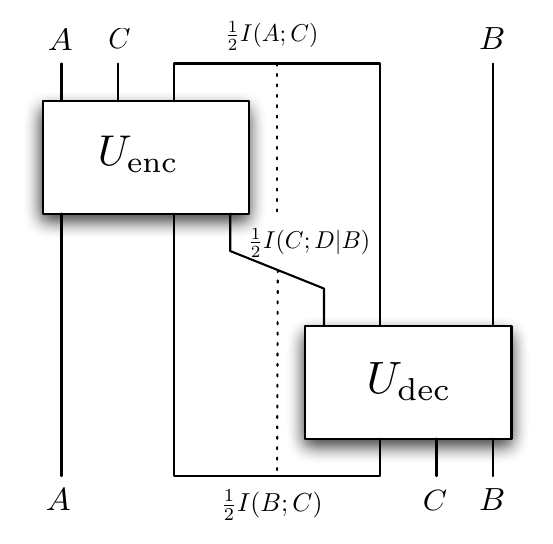}
\caption{Potential one-shot redistribution circuit making time-reversal symmetry apparent.} 
\label{fig:symmetry}
\end{figure}

We expect state redistribution to be a useful primitive for studying more complicated state transfer problems.  Most generally, one can imagine $n$ spatially separated parties all holding various parts of a global multipartite state, wishing to shuffle their subsystems around in some arbitrary but predetermined way.  There is a multitude of ways that redistribution could be applied to give achievable rate regions for such problems, where each round of communication would fit our general setting,  although they would most likely be suboptimal in general.  A simple example along these lines, for which the optimal solution is not yet known,  was considered in \cite{OW05}, where Alice and Bob wish to swap two systems.    Perhaps judicious use of state redistribution can lead to new achievable rates for this or related problems by optimizing over ways of splitting the systems to be swapped into subsystems.  

Apparently, one half of the mutual information plays a central role in characterizing the optimal rates in this paper.  In the following somewhat mysterious fashion, this quantity can be considered as a ``measure" of the correlations between two subsystems. 
By analogy with thermodynamics, it is possible to identify an underlying heuristic organizing principle governing our optimal rates that perhaps could lend itself to further generalizations of redistribution.  The main task of state redistribution is to transform between two configurations of the subsystems as follows:
\[AC\big| B \big| D \to A\big| CB \big | D.\]
Let $\CA_{\text{initial/final}}$ (resp.\ $\CB$) denote the systems Alice (resp.\ Bob) holds at the beginning/end of the protocol.  
Consider the following ``dynamic potentials" relative to Alice$\to$Bob communication:
\begin{IEEEeqnarray*}{rCl}
K_\text{initial}^{\CA\to \CB} &\equiv& \12 I(\CA_\text{initial};D) = \12 I(AC;D) \\
K_\text{final}^{\CA\to \CB}&\equiv&  \12 I(\CA_\text{final};D) 
 = \12 I(A;D).
\end{IEEEeqnarray*}
We interpret these as indicating the correlations between Alice's systems and the reference, both before and after redistribution.  
The optimal qubit rate for redistribution is easily shown to equal the difference between the dynamic potentials
\[K_\text{final}^{\CA\to \CB} - K_\text{initial}^{\CA\to \CB} = \12 I(C;D|A) = \12 I(C;D|B).\]
We are therefore operationally justified in interpreting this difference as measuring the correlations with the reference that Alice must transfer to Bob to redistribute the state.
Analogously, we may also define ``static potentials"
\begin{IEEEeqnarray*}{rCl}
S^{\CA\to \CB}_\text{initial} &\equiv& \12 I(\CA_\text{initial};\CB_\text{initial}) = \12 I(AC;B) \\
 S^{\CA\to \CB}_\text{final} &\equiv&  \12 I(\CA_\text{final};\CB_\text{final}) = \12 I(A;BC)
 \end{IEEEeqnarray*}
that indicate the correlations between Alice's and Bob's systems at each state of redistribution. Similarly, the optimal ebit rate can be shown to equal the difference of the static potentials
\[S^{\CA\to \CB}_\text{final} -  S^{\CA\to \CB}_\text{initial} = \12 I(A;C) - \12 I(B;C).\]
It is operationally justifiable to consider this difference as the amount of excess correlation between Alice and Bob that is involved in going between the two configurations.  

Relative to the Bob$\to$Alice direction, the dynamic potentials are subtracted from a constant
\[K^{\CB\to \CA}_{\text{initial/final}} = H(D) - K^{\CA\to \CB}_{\text{initial/final}}\]
while the static potentials obey 
\[S^{\CA\to \CB}_{\text{initial/final}} = S^{\CB\to \CA}_{\text{final/initial}}.\] 
Subtracting these potentials as above, we find that 
\[K^{\CB\to \CA}_{\text{final}} - K^{\CB\to \CA}_{\text{initial}} = K^{\CA\to\CB}_{\text{final}} - K^{\CA\to \CB}_{\text{initial}} ,\]
while 
\[S^{\CA\to \CB}_\text{final} -  S^{\CA\to \CB}_\text{initial} = -\big(S^{\CB\to \CA}_\text{final} -  S^{\CB\to \CA}_\text{initial}\big),\]
providing another explanation of the symmetry properties of the optimal rates.  
One could imagine generalizations of the above where more complicated potentials are defined for redistribution problems involving many more parties.  However, we expect it would be challenging to find operational justifications for such theories.

\section*{Acknowledgments}
We would like to thank Charlie Bennett for suggesting the circuit pictured in Figure~\ref{fig:symmetry} and Toby Berger for encouraging us to find a quantum counterpart to the classical result on successive refinement of information.  Igor Devetak was supported in part by the NSF grants CCF-0524811 and CCF-0545845 (CAREER).  Jon Yard's research at Caltech was supported from the NSF under the grant PHY-0456720.   His research at LANL is supported by the Center for Nonlinear Studies (CNLS), the Quantum Institute and the LDRD program of the U.S.\ Department of Energy.

\section*{Note added}
After a preprint of this article was made available, a one-shot version of our main result along the lines of Figure~\ref{fig:symmetry} was found \cite{Opp08,YBW08}.


\appendix
Here we collect the proofs of some auxiliary results used in the proof of Theorem~\ref{theo:1shotfqrspiggyback}.  Our proof of the robust decoupling lemma (Lemma~\ref{lem:robustdecoupling}) relies on the following non-robust version from \cite{ADHW06}.
\begin{lem}[One-shot decoupling]
Let a density matrix $\ph^{CE}$ be given and fix a unitary decomposition $W^{C\to S\h{B}}$ of $C$ into subsystems.  For each unitary $U^{C\to C}$, define 
\[\ph_U^{S\h{B}E} =  WU\ph^{CE}U^\dagger W^\dagger.\]
Then 
\begin{IEEEeqnarray}{rCl}
\int_{\CU(C)} \Norm{\ph^{\h{B}E}_U - \pi^{\h{B}}\ox \ph^E}^2_1dU 
&\leq& \frac{|C| \Norm{\ph^E}_0 
\Norm{\ph^{CE}}_2^2}{|S|^2}. \hspace{.3in}
\end{IEEEeqnarray}
\label{lem:decoupling}
\end{lem}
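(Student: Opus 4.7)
The plan is to reduce the $\ell^1$-norm estimate to an $\ell^2$-norm estimate via a rank bound on the support of the error operator, and then explicitly evaluate the resulting Haar average using the standard second-moment formula for $U^{\ox 2}$.

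First I would observe that both $\ph_U^{\h{B}E}$ and $\pi^{\h{B}} \ox \ph^E$ are supported inside $\h{B} \ox \supp(\ph^E)$. The second inclusion is by construction; for the first, note that the partial trace over $C$ is unitarily invariant, so the $E$-marginal of $WU\ph^{CE}U^\dagger W^\dagger$ equals $\ph^E$, and the full state $\ph_U^{S\h{B}E}$ lies in $S\h{B} \ox \supp(\ph^E)$, hence so does its partial trace over $S$. Writing $X_U = \ph_U^{\h{B}E} - \pi^{\h{B}} \ox \ph^E$, the Cauchy--Schwarz bound $\norm{X_U}_1^2 \leq r \cdot \norm{X_U}_2^2$ applied with $r = |\h{B}| \cdot \norm{\ph^E}_0$ integrates pointwise in $U$ to
\[\int \norm{X_U}_1^2 \, dU \,\leq\, |\h{B}| \norm{\ph^E}_0 \int \norm{X_U}_2^2 \, dU.\]

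Next, I would verify $\int \ph_U^{\h{B}E} \, dU = \pi^{\h{B}} \ox \ph^E$ via Schur's lemma applied to the $C$-register together with $W\pi^C W^\dagger = \pi^S \ox \pi^{\h{B}}$. Expanding the square on the right-hand side then reduces it to $\int \Tr[(\ph_U^{\h{B}E})^2] \, dU - \norm{\ph^E}_2^2/|\h{B}|$. For the remaining integral I would use the swap trick $\Tr(A^2) = \Tr[F(A\ox A)]$ with $F = F^{\h{B}\h{B}'} \ox F^{EE'}$, commute the two copies of $W$ and $U$ through, and land on a $U^{\ox 2}$ twirl of an operator built from $\ph^{CE} \ox \ph^{C'E'}$. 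The standard moment formula
\[\int (U^{\ox 2}) X (U^{\dagger \ox 2}) \, dU \,=\, \alpha \openone^{CC'} + \beta F^{CC'},\]
with $\alpha, \beta$ depending only on $\Tr X$ and $\Tr(F^{CC'} X)$, then produces a linear combination of $\norm{\ph^E}_2^2$ and $\norm{\ph^{CE}}_2^2$.

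The remaining bookkeeping amounts to computing two traces of the operator $M_W = (W^\dagger \ox W^\dagger)(\openone^{SS'} \ox F^{\h{B}\h{B}'})(W\ox W)$ on $CC'$. Using $|C| = |S||\h{B}|$ and the factorization $F^{CC'} = F^{SS'} \ox F^{\h{B}\h{B}'}$ induced by $W$, these evaluate to $\Tr M_W = |S|^2|\h{B}|$ and $\Tr(F^{CC'} M_W) = |C||\h{B}|$. Solving the resulting $2\times 2$ linear system gives $\beta = (|C||\h{B}| - |S|)/(|C|^2-1) \leq 1/|S|$, while a short algebraic check shows that after subtracting $\norm{\ph^E}_2^2/|\h{B}|$ the residual coefficient of $\norm{\ph^E}_2^2$ is nonpositive. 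Hence $\int \norm{X_U}_2^2 \, dU \leq \norm{\ph^{CE}}_2^2/|S|$, and combining with the first display yields the claim since $|\h{B}|/|S| = |C|/|S|^2$.

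The main obstacle is the combinatorics of how the bipartite swap $F^{\h{B}\h{B}'}$ passes through the $U^{\ox 2}$ twirl via the decomposition $W: C \to S\h{B}$. Once the identity/swap decomposition of the twirl is in hand, the rest is a routine but tight algebraic manipulation, with the bound $\beta \leq 1/|S|$ requiring only $|S| \geq 1$ and with the nonpositivity of the $\norm{\ph^E}_2^2$ coefficient being exactly what eliminates the potentially dominant ``diagonal'' contribution.
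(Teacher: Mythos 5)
Your proposal is correct. One point of comparison worth making explicit: the paper itself contains no proof of Lemma~\ref{lem:decoupling} --- it is stated in the appendix without proof, as the known (non-robust) one-shot decoupling lemma from the fully quantum Slepian--Wolf/reverse Shannon literature \cite{ADHW06}, and the appendix only derives the robust version (Lemma~\ref{lem:robustdecoupling}) from it. Your argument supplies exactly the standard derivation that is being imported: the rank--Cauchy--Schwarz reduction $\Norm{X_U}_1^2 \leq \big(|\h{B}|\,\Norm{\ph^E}_0\big)\Norm{X_U}_2^2$, valid because both $\ph_U^{\h{B}E}$ and $\pi^{\h{B}}\ox\ph^E$ are supported in $\h{B}\ox\supp\ph^E$; the mean $\int \ph_U^{\h{B}E}\,dU = \pi^{\h{B}}\ox\ph^E$, which collapses the expansion of the square to $\int\Tr\big[(\ph_U^{\h{B}E})^2\big]dU - \Norm{\ph^E}_2^2/|\h{B}|$; and the $U^{\ox 2}$ twirl with the swap trick. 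Your bookkeeping checks out: $\Tr M_W = |S|^2|\h{B}|$ and $\Tr(F^{CC'}M_W) = |S||\h{B}|^2 = |C||\h{B}|$ give $\beta = (|C||\h{B}|-|S|)/(|C|^2-1) \leq 1/|S|$ and $\alpha = |\h{B}|(|S|^2-1)/(|C|^2-1) \leq 1/|\h{B}|$, the latter being precisely your claim that the residual coefficient of $\Norm{\ph^E}_2^2$ is nonpositive, so $\int\Norm{X_U}_2^2\,dU \leq \Norm{\ph^{CE}}_2^2/|S|$ and the prefactor $|\h{B}|/|S| = |C|/|S|^2$ reproduces the stated bound exactly. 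The only trivial caveats are that the twirl formula requires $|C|\geq 2$ (the case $|C|=1$ is vacuous) and that the Schur--Weyl fact that the commutant of $U^{\ox 2}$ is spanned by $\openone^{CC'}$ and $F^{CC'}$ should be cited or proved; neither affects correctness.
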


\begin{proof}[Proof of Lemma~\ref{lem:robustdecoupling}]
By convexity of the trace norm
\[\Norm{\overline{\psi}^{\h{B}E} - \pi^{\h{B}}\ox \psi^E}_1
\leq \int_{\CU(C)} \Norm{\psi^{\h{B}E}_{U} - \pi^{\h{B}}\ox \psi^E}_1dU,\]
where $dU$ is Haar measure on $\CU(C)$.
We use the triangle inequality to bound the integrand:
\begin{IEEEeqnarray}{rCl}
 \Norm{\psi^{\h{B}E}_{U} - \pi^{\h{B}}\ox \psi^E}_1 
 &\leq&  \Norm{\ph^{\h{B}E}_{U} - \pi^{\h{B}}\ox \ph^E}_1 \\
& +&  \Norm{\psi^{\h{B} E}_U -\ph^{\h{B}E}_U}_1 \\
& +&  \Norm{\pi^{\h{B}}\ox \psi^{E} - \pi^{\h{B}}\ox \ph^E}_1.
\end{IEEEeqnarray}
The second term is bounded using monotonicity, unitary invariance of the trace norm, and the assumed $\ep$-closeness of $\psi^{CE}$ and $\ph^{CE}$:  
\begin{IEEEeqnarray}{rCl}
\Norm{\psi^{\h{B} E}_U -\ph^{\h{B}E}_U}_1 
&\leq& \Norm{\psi^{S\h{B} E}_U -\psi^{S\h{B}E}_U}_1 \nn\\
&=& \Norm{\psi^{CE} -\ph^{CE}}_1 \nn\\
&\leq& \ep.
\end{IEEEeqnarray}
Similarly, the last term satisfies  
\begin{IEEEeqnarray*}{rCl}
\Norm{\pi^{\h{B}}\ox \psi^{E} - \pi^{\h{B}}\ox \ph^E}_1 
&=& \Norm{\psi^{E} -  \ph^E}_1 \\
&\leq& \Norm{\psi^{CE} -\ph^{CE}}_1 \nn\\
&\leq& \ep.
\end{IEEEeqnarray*}
Because $x\mapsto x^2$ is convex, the integral of the first term satisfies
\begin{eqnarray*}
\left(\int_{\CU(C)} \Norm{\ph^{\h{B}E}_{U} - \pi^{\h{B}}\ox \ph^E}_1dU\right)^2 \hspace{1in}\\
\hspace{1in} \,\,\leq \,\,\int_{\CU(C)} \Norm{\ph^{\h{B}E}_{U} - \pi^{\h{B}}\ox \ph^E}^2_1dU. & 
\end{eqnarray*} 
The theorem follows by applying Lemma~\ref{lem:decoupling} to this integral. 
\end{proof}

\begin{proof}[Proof of Lemma~\ref{lem:coherent}]
To begin, note that we may choose the complex phases so that $\bra{k}\bra{\psi}\CL\ket{\psi'_k} = |\bra{k}\bra{\psi}\CL\ket{\psi'_k}|$. Now 
\begin{IEEEeqnarray*}{rCl}
\bra{k}\bra{\psi}\CL\ket{\psi'_k} &\geq& \big(\bra{k}\bra{\psi}\CL\ket{\psi'_k}\big)^2 \\
&\geq&  |\bra{k}\bra{\psi}\CL\ket{\psi_k}|^2 - \norm{\CL(\psi_k) - \CL(\psi_k')}_1.
\end{IEEEeqnarray*}
Because $0\leq \Lambda_k \leq \openone$, we have  
\begin{IEEEeqnarray*}{rCl}
 |\bra{k}\bra{\psi}\CL\ket{\psi_k}|^2 &=& |\bra{\psi_k}\sqrt{\Lambda_k}\ket{\psi_k}|^2 \\
&\geq&  |\bra{\psi_k}\Lambda_k\ket{\psi_k}|^2 \\
&=& \big(\!\Tr\psi_k\Lambda_k\big)^2. 
\end{IEEEeqnarray*}
Furthermore, unitary invariance of the trace norm and (\ref{eqn:fid2tr}) imply that 
\[\norm{\CL(\psi_k) - \CL(\psi_k')}_1 = \norm{\psi_k - \psi_k'}_1 \leq 2\sqrt{1 - |\braket{\psi_k}{\psi'_k}|^2}.\]
Therefore, 
\begin{IEEEeqnarray*}{rCl}
\bra{k}\bra{\psi}\CL\ket{\psi'_k} \geq \big(\!\Tr\psi_k\Lambda_k\big)^2  - 2\sqrt{1 - |\braket{\psi_k}{\psi'_k}|^2}.
\end{IEEEeqnarray*}
Finally, because the functions $x\mapsto x^2$ and $x\mapsto -\sqrt{x}$ are convex, we find that 
\begin{IEEEeqnarray*}{rCl}
\frac{1}{\kappa}\sum_{k=1}^\kappa \bra{k}\bra{\psi}\CL\ket{\psi'_k} &\geq&
(1-P)^2 - 2\sqrt{1-F} \\ 
&\geq& 1-2(P + \sqrt{1-F}) 
\end{IEEEeqnarray*}
as required.
\end{proof}


\bibliographystyle{IEEEtran}

\end{document}